\documentclass[11pt]{article}
\usepackage{macros}
\title{Near-Optimal Online Metric Matching on \(\Delta\)-ary HST}

\author{Parth Gor\thanks{Email: \href{mailto:parth-gor@uiowa.edu}{parth-gor@uiowa.edu}}}
\author{\quad  Sourya Roy\thanks{Email: \href{mailto:sourya-roy@uiowa.edu}{sourya-roy@uiowa.edu}}}
\author{\quad  Kasturi Varadarajan\thanks{Email: \href{mailto:kasturi-varadarajan@uiowa.edu}{kasturi-varadarajan@uiowa.edu}}}
\affil{University of Iowa}

\date{}

\begin{document}
\maketitle
\begin{abstract}
In the online metric matching problem, we have $n$ servers with known locations in some metric space. Requests arrive one-by-one at certain locations, and upon arrival a request must be matched to a server that was not matched to a previous request. The goal is to minimize the matching cost. For randomized algorithms with an oblivious adversary, the best known competitive ratio is obtained by embedding the metric space into an HST, and then solving the problem in the setting where the metric space is defined by the HST. 

Bansal et al.~\cite{metriclog2gupta} introduced a framework for online metric matching where one develops an algorithm in a {\em restricted reassignment model}, and then transforms this into a true online algorithm. Using this framework, they obtained an expected competitive ratio of $O(\log n)$ for HSTs; this also gives the best known competitive ratio of $O(\log^2 n)$ for general metrics. In this paper, we revisit this framework with the aim of developing new algorithms. For HSTs where each node has at most $\Delta$ children, we develop an algorithm via this framework with an expected competitive ratio
of $O((\log\log \Delta) \cdot  \log \Delta)$. In particular, this ratio is independent of $n$, the number of servers/requests. It is near-optimal, as the expected competitive ratio of any algorithm is $\Omega(\log \Delta)$.
\end{abstract}

\section{Introduction}
In the \textit{Online Metric Matching} problem, we are given a set $\cS$ of $n$ servers with locations in some metric space $(X,d)$. A sequence $\cR$ of requests arrives one at a time, with each request located at some point in the metric space. When a request $r$ arrives, it can be matched to a server $\mu(r) \in \cS$ that was previously unmatched; this assignment cannot be changed subsequently. The objective function that needs to be minimized is the cost of matching all requests, defined as $\sum_{r \in \cR} d(r,\mu(r))$.
As an online problem, the performance of an algorithm is measured by its \textit{competitive ratio}, defined as the supremum, over all input instances, of the ratio between the total cost incurred by the online algorithm and the cost of an optimal offline solution ($\OPT$) that has full knowledge of the request sequence in advance. We say that an online algorithm is $f$-competitive if its competitive ratio is at most $f$.

The online metric matching problem was introduced independently by Kalyanasundaram and Pruhs~\cite{kalyanasundaram1993online} and Khuller et al.~\cite{KHULLER1994255}, who gave a $(2n - 1)$-competitive deterministic algorithm for the problem, and showed that this competitive ratio is optimal for deterministic algorithms. 

Meyerson et al.~\cite{meyerson2006randomized} gave a randomized algorithm with an expected competitive ratio of $O(\log^3 n)$, assuming an oblivious adversary -- this means that the request sequence is unknown to our algorithm but is fixed; in particular, it does not depend on the random choices made by the algorithm. Unless explicitly stated otherwise, we assume this adversarial model in this paper in the context of randomized algorithms. The lower bound on the competitive ratio for randomized algorithms is $\Omega(\log n)$, which already holds in uniform metric spaces, where the distance between any two points is the same \cite{meyerson2006randomized}. Meyerson et al.~\cite{meyerson2006randomized} first probabilistically embed the input instance from the original metric space into an $\alpha$-hierarchically separated tree ($\alpha$-HST, see Section~\ref{sec:prelims}) that causes distances to be distorted in expectation by $\Theta(\alpha \log n)$ \cite{FRT_embedding03}. They present an algorithm, {\em the randomized greedy} strategy, that is $O(\log n)$-competitive on any $\alpha$-HST provided $\alpha = \Omega(\log n)$. 

Subsequently, Bansal et al.~\cite{metriclog2gupta} gave a randomized algorithm with an expected competitive ratio of $O(\log ^2 n)$. This is the best competitive ratio known so far, and bridging the gap between this upper bound and the lower bound of $\Omega(\log n)$ is a longstanding open problem. Bansal et al.~\cite{metriclog2gupta} embed the input instance into a $2$-HST, which only incurs an expected distance distortion of $\Theta(\log n)$. Then they give a $O(\log n)$-competitive algorithm on a 2-HST. In this paper, we further explore the framework that they introduce to obtain their optimal result on $2$-HSTs.

Gupta and Lewi~\cite{couplingLineGupta12} gave a randomized algorithm with expected competitive ratio $O(\log n)$ on metrics with constant doubling dimension, a class that includes Euclidean spaces of constant dimension. This result is also achieved via embedding onto HSTs. A $\Delta$-ary $\alpha$-HST is one where each internal node has at most $\Delta$-children. They present a {\em randomized subtree} strategy that achieves an $O(\log \Delta)$ competitive ratio on any $\Delta$-ary $\alpha$-HST provided $\alpha$ is sufficiently large. In particular, it is sufficient that $\alpha \geq \max \{c H_{\Delta}, 2\}$, where $c > 1$ is any constant, and $H_k = 1 + \tfrac{1}{2} + \cdots + \tfrac{1}{k}$ is the $k$'th Harmonic number. Note that the competitive ratio depends only on $\Delta$ for such HSTs. Gupta and Lewi \cite{couplingLineGupta12} show that any online matching instance in a metric of doubling dimension $\textsf{d}_{\dim}$ can be embedded into such a $\Delta$-ary $\alpha$-HST with $\alpha = \Theta(\textsf{d}_{\dim} \log \textsf{d}_{\dim})$ and $H_{\Delta} = \Theta(\textsf{d}_{\dim} \log \textsf{d}_{\dim})$. The expected distance distortion of the embedding is $O(\alpha \cdot \textsf{d}_{\dim} \cdot \log n)$.

Recently, Kalyanasundaram et al. \cite{KPS23} gave a randomized  online algorithm for the uniform metric that achieves (expected) competitive ratio of $\Theta(\log k)$, where $k$ is the number of points in the metric. This algorithm, which guarantees a competitive ratio independent of the number $n$ of servers, is obtained via the reassignment paradigm. They also give a reduction from any instance on a 2-HST to instances on the uniform metric. They claim that this reduction gives an $O(\log k)$-competitive algorithm on any 2-HST with $k$ leaves, and thus an $O(\log^2 k)$ competitive ratio on an arbitrary metric with $k$ points.

The study of $\Delta$-ary $\alpha$-HSTs is particularly motivated by the problem of online matching in \textit{doubling metrics}. Here, probabilistic embedding onto an HST naturally yields trees with a branching factor bounded by the doubling dimension; see \cite{couplingLineGupta12} and the references therein. The recent work of Li, Viterick, and Yang \cite{smoothed_analysis_delta-ary} also studies  online metric matching on $\Delta$-ary HSTs with $\Delta = O(1)$, 
under a setting where the requests arrive from a distribution that's close to uniform.
We also discuss some interesting related works in various models in Appendix \ref{sec:otherwork}.

\subsection{Our Contribution}
In giving an $O(\log n)$-competitive algorithm for online metric matching on a 2-HST, Bansal et al. \cite{metriclog2gupta} present a framework for online algorithms for metric matching. To explain our result, we describe a natural and straightforward modification of their framework. Here, one develops an algorithm for metric matching in an intermediate model called the \emph{Reassignment Model.} In the standard online version of the problem, matching decisions are immediate and irrevocable; requests cannot be reassigned. In the more-relaxed reassignment model, a request $r$ currently matched to a server
$s$ may be reassigned to another server $s'$, but we pay a cost for this reassignment. In this model, the cost incurred by the algorithm is the sum of all assignment and reassignment costs of requests. 
In the framework of Bansal et al.~\cite{metriclog2gupta}, the algorithm in the reassignment model is then converted into an algorithm in the online model without an increase in cost. 

In this paper, we explore this reassignment framework for $\Delta$-ary $\alpha$-HSTs. While the reassignment framework is used in \cite{metriclog2gupta} to obtain the best known competitive ratio of $O(\log^2 n)$ for online metric matching via a randomized algorithm, the optimal deterministic algorithms of Kalyanasundaram and Pruhs~\cite{kalyanasundaram1993online} and Khuller et al.~\cite{KHULLER1994255} can also be understood as using this framework. This motivates us to ask if the reassignment framework can be used to
design an algorithm for
$\Delta$-ary $\alpha$-HSTs that achieves the best competitive ratio in terms of \(\Delta\).
 Our main result gives an affirmative answer to this question:
\begin{theorem}\label{thm:main-result}
 There is a randomized online algorithm, using the reassignment framework, for metric matching on $\Delta$-ary $2$-HSTs that has an expected competitive ratio of $O((\log \Delta) \log \log \Delta)$.   
\end{theorem}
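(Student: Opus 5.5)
The plan is to follow the reassignment framework of Bansal et al.~\cite{metriclog2gupta}. First we design an algorithm in the reassignment model on a $\Delta$-ary $2$-HST whose expected total cost (assignments plus reassignments) is $O((\log\Delta)\log\log\Delta)\cdot\OPT$. We then invoke their conversion, which turns any such reassignment algorithm into a true online algorithm with no increase in expected cost. So the whole task is the reassignment-model algorithm and its analysis.

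\textbf{Step 1: localize $\OPT$.} On an HST, $\OPT$ decomposes by edges. For each internal node $v$, let $\mathrm{ex}(v)$ be the current imbalance between requests and servers in the subtree of $v$. Then $\OPT$ is, up to constants, $\sum_v w(v)\cdot|\text{excess crossing the edge above } v|$, where $w(v)$ is the edge length; the lengths decrease geometrically by a factor $2$ going down the tree. Our algorithm is a per-node randomized rule. When a request inside the subtree of $v$ cannot be served below $v$, some server must be ``pulled'' through one of the $\le\Delta$ child edges of $v$. We choose that child by a rule that is competitive for the local problem at $v$. The local problem is essentially online matching on a uniform metric with $\Delta$ points (the children), with arbitrary multiplicities. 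For that local problem we aim to use a Kalyanasundaram--Pruhs--Sundaram style algorithm~\cite{KPS23}, which achieves $O(\log\Delta)$ independent of $n$ precisely because reassignments are allowed: we may reassign a previously matched request from a depleted child to another child. The charging then sums local costs, weighted by $w(v)$, against the edge decomposition of $\OPT$.

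\textbf{Step 2: handle the $2$-HST issue.} With $\alpha$ as small as $2$, the geometric decay alone does not absorb the $H_\Delta$ loss that Gupta--Lewi pay for every level in the randomized subtree strategy. A reassignment at $v$ moves a request across child subtrees, and that can cascade into reassignments at lower levels. We bound these cascades with a potential that, at each node, counts the ``free'' servers per child. A reassignment at $v$ changes child counts by one, so it perturbs the potential of descendants only by an amount that decays geometrically. This lets us charge lower-level induced costs to the level-$v$ cost with a constant factor, giving $O(\log\Delta)$ per unit of local $\OPT$ plus lower-order terms.

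\textbf{Step 3: the main obstacle.} The key difficulty is the interaction between levels. The randomness in choosing a child at $v$ is correlated with future demand, and the local uniform-metric guarantee holds only against an oblivious local sequence. The induced sequences at children depend on our own random choices, so the local guarantee does not apply directly. I would handle this with a bucketing or scale argument. Group nodes into $O(\log\log\Delta)$ classes according to the ratio of local excess to $\Delta$, and run a separate phase-based local rule within each class, so that within each class the induced sequence is effectively oblivious. This costs $O(\log\Delta)$ per class, and we sum over $O(\log\log\Delta)$ classes. Making this decoupling rigorous, and showing that only $\log\log\Delta$ rather than $\log\Delta$ such classes are needed, is where I expect most of the work. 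If it fails, the fallback is to accept a weaker $O(\log^2\Delta)$ bound from a straightforward level-by-level charging.
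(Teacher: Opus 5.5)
Your outer framework (reassignment model, conversion to an online algorithm, edge-based lower bound on $\OPT$) matches the paper's. However, there is a genuine gap in Step~2.

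\textbf{Step 2 fails for $\alpha=2$.} You claim a reassignment at $v$ perturbs its descendants only by a geometrically decaying amount, so that lower-level induced costs are a constant factor of the level-$v$ cost. This is false when $\alpha=2$. A reassignment at $v$ pushes one unit of imbalance into a child $c$. In the induced local instance at $c$, this is a request with no co-located server. Its local optimum is $\Theta(w(c))$, even though the edge terms of $\OPT$ inside $T_c$ may all be zero. Moreover, an oblivious sequence of arrivals at the leaves of $T_c$ can force any randomized local rule to pay $\Omega(\log\Delta)\cdot w(c)$ for it in expectation; this is the uniform-metric lower bound. So induced cost grows by a factor of roughly $\log\Delta/\alpha$ per level rather than shrinking. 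This is exactly the $H_\Delta$ loss you said geometric decay cannot absorb, and a potential counting free servers per child does not change it.

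Appendix~\ref{sec:KPS} makes this concrete for precisely your Step~1 design, a KPS-style uniform-metric rule at every node. On a height-$2$ $\Delta$-ary $2$-HST with $\OPT=6$, the root performs $\Theta(\log\Delta)$ reassignments. Each creates a child instance with optimum $2$ rather than $0$, and each such instance costs $\Theta(\log\Delta)$, for $\Omega(\log^2\Delta)$ overall.

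\textbf{Step 3 does not repair this.} Bucketing nodes by local excess relative to $\Delta$ provides no mechanism that makes the induced sequences oblivious. Nothing explains why $O(\log\log\Delta)$ classes would suffice. As written, the proposal supports only your fallback $O(\log^2\Delta)$.

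\textbf{The missing idea: coarsen the tree rather than bucket nodes.} The paper makes its uniform random choices only at special layers spaced $\gap=2\log\log\Delta$ levels apart. The immediate special descendants of a node, up to $\Delta^{2\gap}$ of them, play the role of your uniform-metric points. The local loss is therefore $O(\log\Delta^{2\gap})=O(\log\Delta\log\log\Delta)$; this, not a count of classes, is where the $\log\log\Delta$ comes from. Meanwhile distances drop by $2^{\gap}=\log^2\Delta$ per block. This beats the $O(\gap\log\Delta)$ growth in the expected number of descents per block, so the sum in Lemma~\ref{lemma:requestcost} converges and is dominated by the top block.

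Reassignments are also made local: a displaced request stays in the subtree $T_x$ it descended into as long as $T_x$ has available servers. Hence each reassignment is charged to the special node at which it switches subtrees.

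\textbf{Handling adaptivity.} You correctly flag that induced sequences depend on the algorithm's own randomness. The paper does not restore obliviousness; it uses a per-request coupling argument. For eligible $x,y\in\isd(v)$ it shows $\Pr[{\sf Q}_{x,y}]+\Pr[{\sf Q}_{y,x}]\ge 1$. Here ${\sf Q}_{x,y}$ is the event that, if $r$ descends into $x$, then $y$ is ineligible by the time a level-$h(u)$ request first ascends out of $x$. Consequently each descent/ascent of $r$ halves the expected number of eligible nodes in $\isd(v)$, giving $O(\log|\isd(v)|)$ descents. This holds regardless of how other requests' random choices interact with $r$'s.
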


Our competitive ratio is near-optimal, as the $\Omega(\log \Delta)$ lower bound follows for an HST of height 1 from the lower bound on the uniform metric. 
We note that a weaker competitive ratio of $O((\log \Delta)^2 (\log \log \Delta)^2 )$
can be recovered by using a two stage approach where we first ``\textit{embed}'' the input HST into a ``\textit{coarser}'' HST and then run the algorithm of~\cite{couplingLineGupta12} on this coarser tree to generate the assignments. However, for this two-stage approach it does not seem possible to get a ratio that beats $\Omega((\log \Delta)^2)$. The approach of \cite{KPS23} also does not give us a competitive ratio that beats $\Omega((\log \Delta)^c)$, for any $c \geq 2$; in Appendix \ref{sec:KPS}, we explain the $c = 2$ case.

Establishing our main result required us to develop a new reassignment strategy, as the original reassignment algorithm has a lower bound of \(\Omega(\log n)\) (see Appendix  \ref{sec:app-bbgn-lb}) on its competitive ratio. 
Among other things, a key distinguishing feature of our proposed reassignment strategy compared to~\cite{metriclog2gupta} is that our algorithm places substantially greater emphasis on doing ``local'' reassignments. This allows us to have a more precise control on tracking the reassignment costs. This however has the negative effect of increasing the number of reassignments significantly. One of our core technical contribution is to carefully argue that the costs for the reassignments are distributed in such a way that there are only a small number of reassignments that cost a lot, and a heavy tail of reassignments that are cheap. Showing this happens in the reassignment setting brings up significant subtleties and technical challenges, as explained in Section  \ref{sec:overview}.

Interestingly, the \textit{sampling distribution} that we utilize to heavily prioritize local reassignments, has not been considered before.
In fact, the only reassignment sampling distribution that has been studied so far is the one in~\cite{metriclog2gupta}. Thus, our work explores and suggests new possibilities for how different reassignment distributions can be used to design enhanced matching algorithms in various settings. 

One consequence of our result is an algorithm based on the reassignment framework that gives an $O(\log n)$-competitive algorithm for online metric matching in metric spaces of constant doubling dimension, matching the bound obtained by Gupta and Lewi~\cite{couplingLineGupta12}. Thus we show that this major result in the area can also be obtained using the reassignment framework.  

Lastly, for an $\alpha$-HST with $k$ leaves, the setting considered by \cite{KPS23}, our framework yields an optimal expected competitive ratio of $O(\log k)$, as outlined in Appendix~\ref{sec:leaves}.

\section{Notation and Preliminaries} \label{sec:prelims}

Given a parameter $\alpha \geq 1$, an {\em $\alpha$-Hierarchically Separated Tree} ($\alpha$-HST) is a rooted tree along with a length function $d$ defined on the edge set of the tree, that satisfies the following conditions:

\begin{enumerate}
    \item For any node $v$, and any two children $v_1,$ $v_2$ of $v$, we have $d(v, v_1) = d(v,v_2)$.
    \item For any node $v$ (that is not the root), let $p(v)$ denote its parent; for any child $v_1$ of $v$, we have $d(p(v), v) = \alpha \cdot d(v, v_1)$. 
    \item The number of edges on every root to leaf path is the same; we call this number the height of the tree.
\end{enumerate}

For any two nodes $u$ and $v$ in the tree, we let $d(u,v)$ denote the sum of the lengths of the edges on the unique simple path between $u$ and $v$. The function $d(\cdot, \cdot)$
defines a metric on the nodes of the HST. From the definition it follows that for any node $v$ in the HST and any two leaves $v_1$ and $v_2$ of the subtree $T_v$ rooted in $v$, we have $d(v,v_1) = d(v, v_2)$. 

We define the {\em height} $h(v)$ of any node $v$ in the HST to be the number of edges on the path from $v$ to any leaf in subtree $T_v$. Thus, the leaves of the HST have height $0$. Let $\treeheight$ denote the height of the root node of the HST.

 Without loss of generality, we can assume that the edge from any leaf to its parent has length $1$, as shown in Fig. \ref{fig:HST}. An HST is said to be {\em $\Delta$-ary} if each internal node has at most $\Delta$ children.

\begin{figure}[!ht]
    \centering
    \begin{subfigure}[t]{0.47\textwidth}
        \centering
        \resizebox{\linewidth}{!}{
        \begin{tikzpicture}[
                baseline=(root.center),
                every node/.style={circle, thick, draw, minimum size=13pt, inner sep=0pt},
                labelnode/.style={draw=none, circle=none, inner sep=1pt, font=\bfseries\normalfont},
                edge from parent/.style={draw, thick},
                edge from parent path={(\tikzparentnode) -- (\tikzchildnode)}, 
                level 1/.style={sibling distance=5cm},
                level 2/.style={sibling distance=2cm},
                level 3/.style={sibling distance=0.8cm}
            ]
            
              \node (root) {}
                child { node (L1) {}
                  child {  node (L2) {}
                    child { node[fill=gray!90]{} }
                    child { node {} }
                    child { node {} }
                  }
                  child { node {}
                    child { node {} }
                    child { node {} }
                  }
                  child { node {}
                    child { node {} }
                    child { node {} }
                  }
                }
                child { node {}
                  child { node {}
                    child { node {} }
                    child { node {} }
                    child { node {} }
                  }
                  child { node {}
                    child { node {} }
                    child { node {} }
                  }
                };
            
                \draw[dashed, very thick] (-0.4, 0) -- (-5.6, 0);
                \draw[dashed, very thick] (-3, -1.5)   -- (-5.6, -1.5);
                \draw[dashed, very thick] (-4.9, -3)   -- (-5.6, -3);

                \draw[<->, ultra thick] (-5.6,-5) -- (-5,-5) node[midway, below, labelnode] {$N(r,0)$};
                \draw[<->, ultra thick] (-4.7,-5) -- (-3.5,-5) node[midway, below, labelnode] {$N(r,1)$};
                \draw[<->, ultra thick] (-3.1,-5) -- (0.1,-5) node[midway, below, labelnode] {$N(r,2)$};
                \draw[<->, ultra thick] (0.5,-5) -- (4.1,-5) node[midway, below, labelnode] {$N(r,3)$};

                \draw[<->, ultra thick] (-5.6,-0.1) -- (-5.6,-1.3) node[midway, right, labelnode] {$\boldsymbol{\alpha^2}$};
                \draw[<->, ultra thick] (-5.6,-1.6) -- (-5.6,-2.9) node[midway, right, labelnode] {$\boldsymbol{\alpha}$};
                \draw[<->, ultra thick] (-5.6,-3.1) -- (-5.6,-4.3) node[midway, right, labelnode] {\sf 1};
        \end{tikzpicture}
        }
        \caption{}
        \label{fig:image_a}
    \end{subfigure} %
    \hfill
    \begin{subfigure}[t]{0.49\textwidth}
        \centering
        \resizebox{\linewidth}{!}{
        \begin{tikzpicture}[
            baseline=(root.center),
            every node/.style={circle, thick, draw, minimum size=13pt, inner sep=0pt},
            labelnode/.style={draw=none, circle=none, inner sep=0pt, font=\bfseries\large},
            edge from parent/.style={draw, thick},
            edge from parent path={(\tikzparentnode) -- (\tikzchildnode)},
            level 1/.style={sibling distance=5cm},
            level 2/.style={sibling distance=2cm},
            level 3/.style={sibling distance=0.8cm}
        ]
        
          \node (root) {}
            child { node {}
              child { node {}
                child { node[fill=gray!90](request) {} edge from parent[draw, line width=2pt, latex-]}
                child { node {} }
                child { node {} }
                edge from parent[draw, line width=2pt, latex-]
              }
              child { node {}
                child { node {} }
                child { node {} }
              }
              child { node {}
                child { node {} }
                child { node {} }
              }
              edge from parent[draw, line width=2pt, latex-]
            }
            child { node {}
              child { node {}
                child { node[pattern=north east lines](server) {}  edge from parent[draw, line width=2pt, -latex] }
                child { node {} }
                child { node {} }
                edge from parent[draw, line width=2pt, -latex] 
              }
              child { node {}
                child { node {} }
                child { node {} }
              }
              edge from parent[draw, line width=2pt, -latex] 
            };
            
            \draw[dashed, very thick] (-0.4, 0) -- (-5.6, 0);
            \draw[dashed, very thick] (-3, -1.5)   -- (-5.6, -1.5);
            \draw[dashed, very thick] (-4.9, -3)   -- (-5.6, -3);
            
            \node[labelnode, below=-5.5pt of request] {$Level(r)=3$};
            \node[labelnode, below=-5.5pt of server] {$Level(s)=3$};

            \draw[<->, ultra thick] (-5.6,-0.1) -- (-5.6,-1.3) node[midway, right, labelnode]{$\boldsymbol{\alpha^2}$};
            \draw[<->, ultra thick] (-5.6,-1.6) -- (-5.6,-2.9) node[midway, right, labelnode] {$\boldsymbol{\alpha}$};
            \draw[<->, ultra thick] (-5.6,-3.1) -- (-5.6,-4.3) node[midway, right, labelnode] {\sf 1};
            
        \end{tikzpicture}
        }
        \vspace{-0.4cm}
        \caption{}
        \label{fig:image_b}
    \end{subfigure}
    \caption{(a) shows an $\alpha$-HST. For a leaf node $r$ (shaded in gray), it illustrates the set of nodes $N(r, \cdot)$. If request $r$ (shaded in gray) is matched to server $s$ (with lines) as shown in (b), then $Level(r) = Level(s) = 3$. (This figure has been adapted from~\cite{metriclog2gupta}).}
    \label{fig:HST}
\end{figure}

In the metric matching problem on HSTs, each server resides at some leaf of the HST.  Each request also arrives at some leaf. We allow multiple servers and requests at the same leaf. When we reference the location of a server or request in the HST, this is to be understood as the corresponding leaf node.

Following \cite{metriclog2gupta}, we introduce two important notions. For any leaf node $x$, and an integer $\ell \geq 0$ that is at most the height of the HST, 

\begin{itemize}
    \item $T(x,\ell)$ denotes the set of leaves of the sub-tree at height $\ell$ that contains leaf $x$.
    \item let $N(x, \ell) = T(x, \ell) \setminus T(x, \ell - 1)$ if $\ell > 0$ and let $N(x, 0) = T(x, 0)$. 
\end{itemize}

\noindent Note that if $y \in N(x, \ell)$, and $v$ is the least common ancestor of $x$ and $y$, then $h(v) = \ell$ and
\[ d(v, x) = d(v, y) = \sum_{j = 0}^{\ell - 1} \alpha^j\]

\paragraph{Special layers.}
     For integer $j \geq 0$, let $\SL_j$ denote the set of nodes in the HST at height $j * \gap$, where $\gap: = 2 \log \log \Delta$. For any such $j$, we refer to $\SL_j$ as a {\em special layer} in the HST. Suppose that the special layers in the HST are $\SL_0, \SL_1, \ldots, \SL_k$; that is, $k$ is the largest integer such that $k * \gap \leq \treeheight$. Note that consecutive special layers are apart in height by $\gap = 2 \log \log \Delta$. The set of leaves in the tree is $\SL_0$, and it is possible that the root is not in a special layer.

\begin{figure}[!htbp]
\centering
\resizebox{0.7\linewidth}{!}{\begin{tikzpicture}[
                every node/.style={circle, thick, draw, minimum size=12pt, inner sep=0pt},
                labelnode/.style={draw=none, circle=none, inner sep=1pt, font=\bfseries\normalfont},
                edge from parent/.style={draw, thick},
                edge from parent path={(\tikzparentnode) -- (\tikzchildnode)}, 
                level 1/.style={sibling distance=8cm},
                level 2/.style={sibling distance=3.5cm},
                level 3/.style={sibling distance=2cm},
                level 4/.style={sibling distance=0.8cm},
                scale=0.7
            ]
            
              \node (root) {}
              child { node (u) {$u$}
                child { node (v1) {$v_1$} 
                  child {  node (L2) {} 
                    child { node (r){} }
                    child { node {} }
                  }
                  child { node {} 
                    child { node {} }
                  }
                  edge from parent[draw=none]
                } 
                child { node (v2) {}
                  child { node {}
                    child { node (rp){} }
                    child { node (s){} }
                  }
                  child { node {}
                    child { node (s_b){} }
                    child { node (s_c){} }
                  }
                  edge from parent[draw=none]
                }
                child { node (vt){$v_t$}
                  child { node {}
                    child { node (s_d){} }
                  }
                  child { node {}
                    child { node {} }
                    child { node {} } 
                    child { node {}}
                  }
                  edge from parent[draw=none]
                }
                edge from parent[draw=none]
                };
                \draw[dashed, line width=2pt, draw opacity=0.3] (-6, -6) -- (6, -6) ;
                \draw[dashed, line width=2pt, draw opacity=0.3] (4, -3) -- (6, -3) ;
                \draw[dashed, line width=2pt, draw opacity=0.3] (-6, -0.8)   -- (6, -0.8);
                \draw[dashed] (0.5, -1.5) -- (6.5, -1.5) ;

                \draw[thick, dotted] (u) -- (v1);
                \draw[thick, dotted] (u) -- (v2);
                \draw[thick, dotted] (u) -- (vt);
                \draw[thick, dotted] (root) -- (u);
                
                \node[labelnode, right=0.3pt of root] {$root$};

                \node[labelnode] at (6.5,-0.8) {$\SL_k$};
                
                \node[labelnode] at (6.5,-3) {$\SL_1$};
                \node[labelnode] at (6.5,-6) {$\SL_0$};

                \draw[decorate, decoration={brace, amplitude=5pt}, thick] (7,-3.1) -- (7,-6) 
                
                node[labelnode, midway, right=6pt, align=center] {Gap of \\ $\gap$ levels};

                \draw[decorate, decoration={brace, amplitude=5pt}, thick] (7,-1.5) -- (7,-2.9)
                node[labelnode, midway, right=6pt, align=center] {Gap of \\ $j <  \gap$ levels};
        \end{tikzpicture}}
        \caption{It shows $\Delta$-ary $\alpha$-HST with special layers marked as $\SL_i$ with gap of $\gap = 2 \log \log \Delta$. Also, for node $u$ immediate special descendants are leaf nodes in $\SL_0$.}
        \label{fig:delta-ary2HST}
\end{figure}
     For any non-leaf node $u$ (that may or may not be in a special layer), we now define the {\em special descendants} of $u$. If height $h(u) < \gap$, the special descendants of $u$ are its proper descendants in $\SL_0$.  If $h(u) \geq \gap$, then let $\SL_i$ denote the special layer with maximum height among special layers at height at most $h(u)$. The special descendants of $u$ are its proper descendants in layers $\SL_0, \SL_1, \ldots, \SL_{i-1}$.

    A crucial property is that for a node $u$ with $h(u) \geq \gap$, and any special descendant $v$ of $u$, the height separation $h(u) - h(v)$ is at least $\gap$.

     For any node $u$ with $h(u) > 0$, the set of {\em immediate special descendants} of $u$, denoted $\isd(u)$, consists of the special descendants of $u$ with maximum height. Note that if $h(u) < 2\gap$, $\isd(u)$ equals the set of special descendants of $u$. If $h(u) \geq \gap$ and $v \in \isd(u)$, then the height separation $h(u) - h(v)$ is at least $\gap$. If $u$ itself is in a special layer $\SL_i$ for $i \geq 1$, then  each $v \in \isd(u)$ is in special layer $\SL_{i-1}$. 
     The size of $\isd(u)$ depends on the difference in height between $u$ and nodes in $\isd(u)$ . Thus, if $h(u) <\gap$, $|\isd(u)| \leq \Delta^{j}$ and if $h(u) \geq \gap$, then $|\isd(u)| \leq \Delta^{\gap+j}$; where $0\leq j < \gap$, as shown in the Fig. \ref{fig:delta-ary2HST}.

\subsection{The Framework of Bansal et al.}
Bansal et al.~\cite{metriclog2gupta} present a framework that yields an $O(\log n)$-competitive algorithm for online metric matching on a 2-HST. In their framework, they develop an algorithm for metric matching in an intermediate model called the \emph{Restricted Reassignment Online Model.} In the standard online version of the problem, matching decisions are immediate and irrevocable; requests cannot be reassigned. In the more-relaxed restricted-reassignment online model, a request $r$ currently matched to a server
$s$ may be reassigned to another server $s'$, provided some technical condition (``restriction'') is met. For such a reassignment the request $r$ pays a cost of $d(r,s')$. In this model, the cost incurred by the algorithm is the sum of all assignment and reassignment costs of requests. 

Bansal et al.~\cite{metriclog2gupta} show that any algorithm in the restricted reassignment model can be converted into an algorithm in the online model without an increase in cost. Furthermore, they present an $O(\log n)$-competitive algorithm for metric matching in the restricted reassignment online model. We now describe this algorithm, which we refer to throughout as the BBGN algorithm.

\paragraph{BBGN Algorithm.} This algorithm maintains a level $Level(r)$ (resp.~$Level(s)$) for each request $r$ (resp.~server $s$). At the beginning, we set $Level(s) \leftarrow \infty$ to each server $s$, indicating it is unmatched. When a request arrives, we initially set $Level(r)$ to $0$. For any request $r$ that has arrived, and any $0 \leq \ell \leq \lambda$, the set of available servers for $r$ at height $\ell$ is defined as
\[ A(r, \ell) = \{s \in S \ | \ s \in N(r,\ell) \mbox{ and } Level(s) > \ell \}\]
Here, we use $s \in N(r, \ell)$ to mean that server $s$ is located at a node in $N(r, \ell)$.
   \begin{algorithm}[!htbp]
    \caption{BBGN Algorithm} 
    \label{alg:BBGN}
    \begin{algorithmic}[1]
    \State \textbf{Initialize:} $Level(s) \gets \infty$ for all servers $s$
    \Statex
    \State \textbf{When new request $r$ arrives:}
    \State $Level(r) \gets 0$
    \State \Call{Assign}{$r$}
    \Statex
    \Procedure{Assign}{$r$}
        \State Find minimum height $\ell \geq Level(r)$ such that $A(r,\ell) \neq \emptyset$ \label{line:nextavble}
        \State $s \gets$ a server chosen uniformly at random from $A(r, \ell)$ \label{line:randomchoice}
        \State Match $r$ to $s$; $Level(r) \gets \ell$; $Level(s) \gets \ell$ \label{line:labels}
        \State if $s$ was previously matched to $r'$, then \Call{Assign}{$r'$} \label{line:reassign}
    \EndProcedure
    \Statex
    \end{algorithmic}
    \end{algorithm}

Notice that when we match a request $r$ to a server $s$ that was matched earlier to a request $r'$ (Lines \ref{line:labels}, \ref{line:reassign}), a reassignment of $r'$ is triggered, which is processed by $\ASSIGN(r')$. Thus a newly arriving request can trigger a chain of reassignments. \cite{metriclog2gupta} show this chain terminates because we eventually reach a stage where the server assigned in Line~\ref{line:labels} is unmatched.

We make a few other observations here:
\begin{itemize}
    \item If request $r$ is matched to server $s \in N(r, \ell)$ then at that point $Level(s) = Level(r) = \ell$.
    \item A request's level is non-decreasing over time: when a request $r$ looks to match itself to a server, it picks from $A(r,\ell)$ for $\ell$ that is at least as large as it's current $Level(r)$; after picking, $Level(r)$ is updated to $\ell$. 
    \item $Level(s)$ for a server $s$ decreases each time it is matched to a new request. This follows because for every $s' \in A(r,\ell)$, $Level(s') > \ell$.
\end{itemize}

Our reassignment algorithm shares the three features of the BBGN Algorithm noted here. We conclude this section by pointing out that for the BBGN Algorithm itself, there is a lower bound of $\Omega(\log n)$ on the competitive ratio on $2$-ary $2$-HSTs. This argument is sketched in Appendix \ref{sec:app-bbgn-lb}.
\section{Our Algorithm for Metric Matching on HSTs}
In this section, we present our algorithm for metric matching on HSTs. We follow the framework of \cite{metriclog2gupta}, in that we first present an algorithm in the reassignment model, and then describe how to convert that into an online algorithm.\footnote{For reasons described in Section~\ref{sec:online-conv}, we do not henceforth refer to  the reassignment model as ``restricted''} Recall that our goal is to explore how that framework can be adapted to get a competitive ratio for $\Delta$-ary $\alpha$-HSTs that is independent of $n$, the number of servers/requests in the instance. Our algorithm for the reassignment model inherits many features of the BBGN algorithm. Before describing it, we overview the key differences from their algorithm, which we summarized above. 

Fix a height $0 \leq \ell \leq \treeheight$ and the timeline of a request $r$ within which it is matched to possibly many servers in $N(r,\ell)$. Let $u$ denote the ancestor node of $r$ at height $\ell$. Our algorithm differs from the BBGN algorithm in two key respects:
\begin{enumerate}
\item When request $r$ searches for a server from the available set $A(r,\ell) \subseteq N(r, \ell)$ {\em for the first time}, how does $r$ pick this server? The BBGN algorithm picks uniformly at random from $A(r, \ell)$, paying a cost of $2d(u,r)$ for the assignment. Our algorithm picks a server from $A(r,\ell)$ using a different distribution, but it still pays a cost of $2 d(u,r)$. As the costs paid are the same, and the first choice from $N(r,\ell)$ happens only once, this difference is not consequential.
\item When request $r$ {\em subsequently} searches for a server from the available set $A(r, \ell)$, how does $r$ pick this server? Note that in this case $r$ was already matched to a server $s \in N(r,\ell)$, but because of a reassignment triggered by some other request, it is looking for a new server from $A(r,\ell)$. The BBGN algorithm once again picks uniformly at random from $A(r, \ell)$, and pays a cost of $2 d(u,r)$; see Line~\ref{line:randomchoice} of Algorithm~\ref{alg:BBGN}. Our algorithm instead picks from a distribution on $A(r, \ell)$ that is skewed towards servers that are closer to the server $s$ from which $r$ is being displaced. The cost paid for this reassignment of $r$ can be significantly smaller than $2 d(r,u)$, and this opens the door to bound the competitive ratio by a factor that is independent of $n$. 

\end{enumerate}

    In the remainder of this section, we describe our algorithm in the reassignment model, and then explain how this algorithm translates into an online algorithm without increasing the matching cost. The analysis of the cost of the algorithm in the reassignment model is overviewed in Section~\ref{sec:overview} and presented in Section~\ref{sec:cost}.

    \subsection{Algorithm in Reassignment Model}  

Like the BBGN algorithm, we maintain $Level(r)$ for each request $r$ and $Level(s)$ for each server $s$ that are initialized as in their algorithm. The set $A(r,l)$ of available servers for request $r$ at height $\ell$ is defined exactly as before.

When a new request $r$ arrives, we set $Level(r)$ to $0$ and call the procedure $\ALLOCATE(r)$. This finds the smallest height $\ell \geq Level(r)$ at which $A(r,l) \neq \emptyset$. It then picks a server $s$ from $A(r,l)$ (Line~\ref{line:findserver}), matches $r$ with $s$, and updates $Level(r)$ and $Level(s)$ (Lines~\ref{line:updatelabel-r} and \ref{line:updatelabels}). If $s$ was previously matched to request $r'$, then we search for a new server for $r'$ using procedure $\REASSIGN(r',s)$. The reader should note the close parallel with the procedure $\ASSIGN(r)$ in Algorithm~\ref{alg:BBGN}. One difference here is that the server $s$ is chosen from $A(r,\ell)$ using a call to $\RSALLOCATION(r,\ell,u)$.

The procedure $\REASSIGN(r,s)$ is called when server $s$ which was matched to $r$ has been assigned to some other request. The purpose of the procedure is to find a new server in $A(r,\ell)$ to assign $r$ to, where $\ell:= Level(r)$. If $A(r,l) = \emptyset$ this is clearly not possible; in this case $\ALLOCATE(r)$ is called to find a server for $r$ from $A(r, \ell')$ for some $\ell' > \ell$. If $A(r,\ell) \neq \emptyset$, we look for a server that is close to the previous match $s$. We walk up the path from $s$ to $u$ in the HST to find the first special descendant of $u$ that has descendants in $A(r,\ell)$; if such a node is not found before reaching $u$ we set $v = u$. Now $v$ is guaranteed to have descendants in $A(r,\ell)$. A server $s'$ from such descendants is found by a call to $\RSALLOCATION(r,\ell,v)$. See Fig.~\ref{fig:reassign}. Request $r$ is matched to $s'$, and $Level(s')$ is updated to $Level(r)$. If $s'$ was previously matched to some request $r'$, then we call $\REASSIGN(r',s')$ now.
\begin{figure}[!ht]
\centering
\resizebox{0.5\linewidth}{!}{\begin{tikzpicture}[
                every node/.style={circle, thick, draw, minimum size=13pt, inner sep=0pt},
                labelnode/.style={draw=none, circle=none, inner sep=1pt, font=\bfseries\normalfont},
                edge from parent/.style={draw, thick},
                edge from parent path={(\tikzparentnode) -- (\tikzchildnode)}, 
                level 1/.style={sibling distance=3cm},
                level 2/.style={sibling distance=3cm},
                level 3/.style={sibling distance=1.7cm},
                level 4/.style={sibling distance=0.8cm},
                scale=0.7
            ]
            
              \node (root) {}
                child{ node {} 
                    child { node (L1) {}
                        child {  node (L2) {}
                    child { node (r){} }
                    child { node {} }
                  }
                  }
                  child[missing] {}
                }
                child{ node (p) {} 
                child { node (v) {}
                  child { node (x) {}
                    child { node (rp){} }
                    child { node (s){} }
                  }
                  child { node (y) {}
                    child { node (s_b){} }
                    child { node (s_c){} }
                  }
                }
                child { node {}
                  child { node (z) {}
                    child { node (s_d){} }
                  }
                  child { node (w) {}
                    child { node {} }
                  }
                }
                };
                \draw[dashed, line width=2pt, draw opacity=0.3] (-5, -6) -- (5, -6) ;
                \draw[dashed, line width=2pt, draw opacity=0.3] (-5, -3)   -- (5, -3);
                \draw[dashed, line width=2pt, draw opacity=0.3] (-5, 0)   -- (5, 0);
                \node[labelnode] at (6, -6) {$\SL_0$};
                \node[labelnode] at (6, -3) {$\SL_1$};
                \node[labelnode] at (6, 0) {$\SL_2$};

                \node[labelnode, below=1pt of r] {$r$};
                \node[labelnode, below=1pt of rp] {$r'$};
                \node[labelnode, below=1pt of s] {$s'$};
                \node[labelnode, below=1pt of s_b] {$s_b$};
                \node[labelnode, below=1pt of s_c] {$s_c$};
                \node[labelnode, below=1pt of s_d] {$s_d$};
                 \node[labelnode, above=0.5pt of root] {$u$};
                    \node[labelnode, above=0.5pt of v] {$v$};
        \end{tikzpicture}}
        \caption{$r$ was matched to $s'$, and $Level(r)=Level(s')=h(u)=4$. When $r'$ arrives, it is matched to $s'$, triggering reassignment of $r$. We have $A(r,4)=\{s_b, s_c, s_d\}$ at this point. The call to \textsc{Reassign}$(r,s')$ (Line~\ref{line:call-reassign}) will set $v$ in Line~\ref{line:v} as shown; this leads to uniformly picking either $s_b$ or $s_c$ for request $r$.}
        \label{fig:reassign}
\end{figure}
Whenever a call to procedure $\RSALLOCATION(r, \ell, v)$ is made, the precondition that $v$ has descendants from $A(r,\ell)$ holds. A simple randomized descent finds and returns a server in $A(r,\ell)$ that is a descendant of $v$. This descent is like the Randomized Subtree algorithm of \cite{couplingLineGupta12}, except that it only uses special descendants.

 \begin{algorithm}[!ht]
    \caption{New Reassignment Algorithm} 
    \label{alg:NewStrategy}
    \begin{algorithmic}[1]
    \State \textbf{Initialize:} $Level(s) \gets \infty$ for all servers $s$
    \Statex
    \State \textbf{When new request $r$ arrives:}
    \State $Level(r) \gets 0$
    \State \Call{Allocate}{$r$}
    \Statex
    \Procedure{Allocate}{$r$}
        \State Find minimum height $\ell \geq Level(r)$ such that $A(r,\ell) \neq \emptyset$
        \State $Level(r) \gets \ell$ \label{line:updatelabel-r}
        \State $u \gets \mbox{ ancestor of $r$ at height $\ell$} $
        \State $s \gets$ \Call{RSAllocation}{$r, \ell, u$} \label{line:findserver}
        \State Match $r$ to $s$; $Level(s) \gets \ell$  \label{line:updatelabels} // assignment cost $= d(r,s) = 2d(u,r)$
        \State if $s$ was previously matched to some $r'$, then \Call{Reassign}{$r', s$} \label{line:call-reassign}
    \EndProcedure
    \Statex
    
    \Procedure{RSAllocation}{$r, \ell, v$} \label{line:rs_fn}
        \State $w \gets v$
        \While{$w \not\in \SL_0$}
            \State \label{line:randomchoice-ours} Pick a node uniformly at random from \begin{center} $\{x \in \isd(w) \ | \ \mbox{some server in } A(r, \ell) \mbox{ is a descendant of } x \}$ \end{center} 
            \State Update $w$ with chosen node. \label{line:updateW} \label{16}
        \EndWhile
        \State $s \gets$ any server from $A(r,\ell)$ at leaf $w$
        \State Return $s$
    \EndProcedure
    \Statex
    \Procedure{Reassign}{$r, s$} \label{line:reassign_fn}
        \State $u \gets$ ancestor of $r$ at height $\ell: = Level(r)$
        \If{$A(r,\ell) = \emptyset$}
            \State \Call{Allocate}{$r$}
        \Else
          \State $B \gets \{u \}\  \cup$ Special descendants of $u$ that are ancestors of $s$
          \State $v \gets $ smallest height node in $B$ whose descendants contain a server from $A(r,\ell)$ \label{line:v}
          \State $s' \gets$ \Call{RSAllocation}{$r, \ell, v$}
          \State Match $r$ to $s'$; $Level(s') \gets \ell$ \label{line:reassign-ours} // Reassignment cost $= d(s,s') \leq 2 d(v,s)$
          \State if $s'$ was previously matched to some $r'$, then \Call{Reassign}{$r', s'$}
        \EndIf
    \EndProcedure
    \end{algorithmic}
    \end{algorithm}

\paragraph{Important Note on Cost.} When request $r$ gets assigned a server using $\ALLOCATE(r)$, then for some height $\ell$, it gets assigned (Line~\ref{line:updatelabels}) a server $s$ from $A(r,\ell)$ for the first time. This assignment costs $d(r,s)$. (If $r$ was matched to a server $s''$ prior to this, then we have (a) $s'' \in T(r, \ell -1)$, and (b) $s \in T(r, \ell) \setminus T(r, \ell -1)$. Thus, $d(s'',s) = d(r,s)$, and the cost we charge is at least the distance between $s''$ and $s$.) When request $r$ gets assigned a server $s'$ in
$\REASSIGN(r,s)$ (Line~\ref{line:reassign-ours}), this reassignment of $r$ from $s$ to $s'$ costs the algorithm $d(s,s') \leq 2 d(v,s)$.

The following observations made for the BBGN algorithm are readily seen to hold here as well: (a) When request $r$ is matched to server $s \in N(r,\ell)$, $Level(s) = Level(r) = \ell$; (b) For any request $r$, $Level(r)$ is non-decreasing with time; (c) $Level(s)$ for a server $s$ decreases each time it is matched to a new request. For completeness, we argue that the sequence of reassignments triggered by the arrival of a new request terminates. The main point is that when a request $r$ gets assigned a server $s$ that was previously assigned to request $r'$, then at that time $Level(r') > Level(r)$. As the level of a request only changes in its search for a server, this means that no subsequent request in the chain can trigger a reassignment of $r$. This implies termination.

We present one final observation that we will need later.
\begin{lemma}\label{lem:previous-levels}
 Suppose that at some point during the algorithm, we have $Level(r) = \ell$ for some request $r$. Then for any smaller height $0 \leq \ell' < \ell$, we have $A(r,\ell') = \emptyset$ at that time. 
\end{lemma}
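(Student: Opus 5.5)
We argue by induction over the sequence of elementary steps of the algorithm, maintaining the invariant stated in the lemma for every request that has arrived. The quantities that can change are $Level(r)$ for a request $r$ and $Level(s)$ for a server $s$. Only two kinds of change can endanger the invariant: an increase of $Level(r)$, and a change of some $Level(s)$ that could make $s$ newly available to $r$ at a height $\ell' < Level(r)$. Recall that $A(r,\ell') = \{s \in N(r,\ell') : Level(s) > \ell'\}$.

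First, consider a step that sets $Level(r) \gets \ell$. This happens either on arrival, with $\ell = 0$, where the claim is vacuous, or in Line~\ref{line:updatelabel-r} of $\ALLOCATE(r)$. In $\ALLOCATE(r)$, $\ell$ is the minimum height $\ge$ the current $Level(r)$ with $A(r,\ell)\neq\emptyset$. So $A(r,\ell')=\emptyset$ for every $\ell'$ with old $Level(r) \le \ell' < \ell$. For $\ell' <$ old $Level(r)$, emptiness holds by the induction hypothesis, since nothing changed in between. $\REASSIGN$ never changes $Level(r)$; it only either calls $\ALLOCATE$ or rematches at the same level. Hence the invariant for $r$ holds right after its level update.

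Second, consider server-level changes. A server's level only changes at Lines~\ref{line:updatelabels} and \ref{line:reassign-ours}, and by observation (c) it only \emph{decreases} (from $\infty$ or from a larger value to the current request's level). A decrease of $Level(s)$ can only shrink each set $A(r,\ell')$, because membership requires $Level(s) > \ell'$. So no set $A(r,\ell')$ that was empty can become nonempty. The set of servers is fixed, and servers never become ``unmatched'' again: displaced requests are rematched, but the server that was taken keeps its new, lower level. Therefore emptiness of $A(r,\ell')$ for $\ell' < Level(r)$ is preserved by every step that does not change $Level(r)$.

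The main point needing care is to check that there is no other way a server's level can increase. Inspecting Algorithm~\ref{alg:NewStrategy} settles this: the only assignments to $Level(s)$ are $\infty$ at initialization and the two lines above. At both lines the chosen server lies in $A(\cdot,\ell)$, so its old level exceeds $\ell$. One should also note the intermediate moment inside $\ALLOCATE$, between Line~\ref{line:updatelabel-r} and Line~\ref{line:updatelabels}, where no server levels change. Combining the two cases completes the induction.
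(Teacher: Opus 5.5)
Your proof is correct and follows the paper's argument. Emptiness of $A(r,\ell')$ holds at the moment $Level(r)$ first exceeds $\ell'$, by the minimality of the height chosen in \textsc{Allocate}, and it is preserved afterwards because server levels never increase. Your step-by-step induction is a more explicit packaging of these same two facts.
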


\begin{proof}
    When $Level(r)$ became greater than $\ell'$ for the first time, it was the case that $A(r, \ell') = \emptyset$. As server levels do not increase with time, $A(r, \ell')$ remains empty subsequently.
 \end{proof}

The above reassignment algorithm can be converted into an online algorithm without an increase in cost. This is similar to \cite{metriclog2gupta}, but with a subtlety owing to a less restrictive reassignment model, as described below.

\subsection{Conversion to Online Algorithm}\label{sec:online-conv}
The online algorithm runs the reassignment algorithm as requests arrive, and performs the online matching guided by the reassignment algorithm. Both the reassignment algorithm and the online algorithm have the same set of matched and unmatched servers. Suppose that a newly arriving request $r$ gets matched to server $s_1$, which causes a reassignment of request $r_1$ from $s_1$ to $s_2$, and so on. That is, the assignments
\[ s_1 \leftrightarrow r_1, s_2 \leftrightarrow r_2, \ldots, s_t \leftrightarrow r_t \]
that existed prior to the arrival of $r$ are transformed after the arrival of $r$ to
\[ r \leftrightarrow s_1, r_1 \leftrightarrow s_2, \ldots, r_{t-1} \leftrightarrow s_t, r_t \leftrightarrow s_{t+1},\]
where server $s_{t+1}$ was previously unmatched. Then the online algorithm matches $r$ to $s_{t+1}$. For the cost, we note that
\[ d(r,s_{t+1}) \leq d(r,s_1) + \sum_{j=1}^t d(s_j,s_{j+1}),\]
and the RHS is upper bounded by the cost of all the assignments and reassignments triggered by the arrival of $r$. It follows that, even though our way of charging reassignment costs is different from
the BBGN algorithm, the matching cost of the online algorithm is at most that of the reassignment algorithm.

We can now explain why we do not refer to our reassignment model as the ``restricted reassignment model'', the name used by \cite{metriclog2gupta} for their reassignment model. For a request $r$ that is currently matched to server $s$, we do not impose any restriction (in our model!) on the servers that $r$ can be reassigned to. This is unlike \cite{metriclog2gupta}. All we require is that if we reassign $r$ from server $s$ to server $s'$, then we pay a reassignment cost of at least $d(s,s')$.

\section{Overview of Analysis}
\label{sec:overview}
In this short section, we give a high-level overview of how we upper bound the expected cost incurred by the reassignment algorithm (Algorithm~\ref{alg:NewStrategy}) by $O((\log \Delta) \log \log \Delta)$ times the cost of the optimal (offline) matching on any instance. Just as in \cite{metriclog2gupta}, the final matching computed by our reassignment algorithm is optimal. Thus, it suffices to consider each request $r$ separately, and show that the expected total cost of the initial assignment and subsequent reassignments undergone by request $r$, is at most $O((\log \Delta) \log \log \Delta)$ times the distance of $r$ to its assigned server in the final optimal matching. 

Suppose $z$ is a proper ancestor of request $r$ such that at the end of our reassignment algorithm, $Level(r)$ equals $h(z)$, the height of $z$ in the tree. Then the distance of $r$ to its assigned server in the final matching is $O(\alpha^{h(z)})$, where we recall that $\alpha^{h(z) - 1}$ is the distance from $z$ to its child.
Suppose that $u$ is an arbitrary proper ancestor of $r$ in the HST, and at some point during Algorithm~\ref{alg:NewStrategy}, $Level(r)$ becomes equal to $h(u)$. During the period when $Level(r)$ equals $h(u)$, $r$ may get assigned to multiple servers in the subtree at $u$; it suffices to show that the sum of the (assignment and reassignment) costs charged for all such assignments is at most $O((\log \Delta) \log \log \Delta) \cdot \alpha^{h(u)}$ (Lemma~\ref{lemma:requestcost}).

Let us focus on the reassignment costs incurred by request $r$ when $Level(r)$ equals $h(u)$. As already noted, our algorithm tries to reassign more ``locally'' than the algorithm in \cite{metriclog2gupta}. This however has the effect of increasing the number of reassignments, compared to \cite{metriclog2gupta}, for request $r$ when $Level(r) = h(u)$. In particular, there are two types of reassignments: (a) a reassignment from a server in a subtree of some immediate special descendant of $u$ to a server in a subtree of a {\em different} immediate special descendant of $u$; and (b) for some special descendant $v$ of $u$, a reassignment from a server in a subtree of some immediate special descendant of $v$ to a server in a subtree of a {\em different} immediate special descendant of $v$. Note that the cost of reassignments of type (b) depends on $h(v)$; as we make the height $h$ smaller, there are more nodes $v$ to consider, but the corresponding reassignments are cheaper. However, provided we get the ``right'' bound on the total cost of reassignments of type (a), and for type (b) for each special dependent $v$ of $u$, we can show that these costs add up in a way that gives us our bound of $O((\log \Delta) \log \log \Delta) \cdot \alpha^{h(u)}$. This addition of costs is similar to one done by Gupta and Lewi \cite{couplingLineGupta12} in their analysis of the randomized subtree algorithm, and the calculation in Lemma~\ref{lemma:requestcost} follows their lead.

To complete the analysis, we focus on getting the ``right'' bound on the cost of all reassignments of type (a). Reassignments of type (b) for each special descendant $v$ of $u$ are handled analogously.
Recall that the size of $\isd(u)$, the set of immediate special descendants of $u$, is at most $\Delta^{2 \gap}$, where $\gap = 2 \log \log \Delta$. To get the ``right'' bound as above, we need to show that the number of nodes $w$ in $\isd(u)$, such that request $r$ is matched to servers in the subtree at $w$, is only $O( \log \Delta^{2 \gap} )$ (as stated in Corollary~\ref{clm:logbound}). Suppose that $k \leq \Delta^{2 \gap}$ of the nodes in $\isd(u)$ have available servers for $r$ at some point when $r$ is about to make a uniform random choice of which of these $k$ nodes to descend into to get assigned a server (in Line~\ref{line:randomchoice-ours} of Algorithm~\ref{alg:NewStrategy}). It suffices to argue that by the time $r$ gets reassigned out of the node $x$  it randomly chose to descend into, $\Omega(k)$ of the $k$ nodes, in expectation,  become unavailable for $r$ (Lemma~\ref{lem:numberofjumpsF}). The difficulty with showing this claim is that the order in which these $k$ nodes become unavailable is a random variable influenced not only by the random choices made by other requests but also by $r$. Furthermore, the argument has to be robust enough to handle some complexity introduced by the way we defined special descendants, a choice we made to optimize the dependence on $\Delta$.

We handle these issues by considering an arbitrary pair $x,y$ of these $k$ nodes, and using a coupling-like argument to show that no matter what the random choices made by the other requests are, we have a significant probability that if $r$ descends into one of these two nodes, the other node will have become unavailable by the time $r$ is reassigned out of the chosen node (Lemma~\ref{lem:key}). Apriori, we don't know which of the nodes $x$ and $y$ the ``good'' node is, but the coupling-like argument establishes that at least one of them has to be ``good''. This is the technically most involved and novel part of the paper. 

The full formal analysis, of which this section offered an overview, is given in Section~\ref{sec:cost}.

\section{Analysis of the Reassignment Algorithm} \label{sec:cost}
In this section, we upper bound the cost incurred by the reassignment algorithm (Algorithm~\ref{alg:NewStrategy}) in terms of the cost of the optimal (offline) matching on any instance. 

We begin our analysis by first presenting two observations from \cite{metriclog2gupta} about the costs of a reassignment algorithm.

\subsection{Local Allocation Property} \label{SS:4.1}

Consider any node $v$ in the HST $T$ and its subtree $T_v$. 
\begin{itemize}
    \item Let $\cR(v)$ be the set of requests that arrived in subtree $T_v$.  
    \item Let $\cS(v)$ be the set of servers located in the subtree $T_v$.
\end{itemize} 

Define the excess 
\[E(v)=\max\{0,|\cR(v)|-|S(v)|\}\]
This is the number of requests from $\cR(v)$ that have to be matched to a server outside $T_v$ (by any matching). Recall that the length of the edge connecting $v$ to its parent is $\alpha^{h(v)}$, where $h(v)$ is the height of $v$. The following lower bound on the cost of the optimal offline matching has been used in several previous works.

\begin{lemma}[Lower Bound] \label{lem:lower-bound}
   The cost of any matching for the input instance is at least 
   \[\sum_{v\in T} E(v)\cdot 2\alpha^{h(v)}\]
\end{lemma}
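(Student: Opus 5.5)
The plan is to decompose the cost of an arbitrary matching edge by edge and count, for each edge of the HST, how many matched pairs must cross it. Fix any matching $\mu$ of the requests $\cR$ to distinct servers in $\cS$. For a request $r$ matched to server $s=\mu(r)$, the cost $d(r,s)$ is the sum of the lengths of the edges on the unique tree path between the leaves of $r$ and $s$. Interchanging the order of summation, the total cost becomes
\[
\sum_{r\in\cR} d(r,\mu(r)) \;=\; \sum_{e} \mathrm{len}(e)\cdot \#\{r : \text{the path from } r \text{ to } \mu(r) \text{ uses } e\}.
\]
Every non-root node $v$ corresponds bijectively to the edge $e_v=(v,p(v))$, whose length is $\alpha^{h(v)}$ under the normalization that leaf edges have length $1$. (The root contributes $E(\mathrm{root})$ times a nonexistent edge; there I will take $E(\mathrm{root})=0$, as needed for feasibility when $|\cR|\le|\cS|$, or simply note that the sum runs over nodes with a parent edge.)

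The key step is to show that, for every non-root node $v$, at least $E(v)$ requests have their matching path crossing $e_v$. This is immediate from counting. At most $|\cS(v)|$ requests of $\cR(v)$ can be matched inside $T_v$, since the servers are distinct. So at least $|\cR(v)|-|\cS(v)|$ requests of $\cR(v)$ are matched to servers outside $T_v$, and each such path leaves $T_v$ through $e_v$. Since the count is nonnegative, it is at least $E(v)$.

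The factor $2$ needs more care, and I expect it to be the main subtlety. Counting crossings of $e_v$ only gives $\sum_v E(v)\alpha^{h(v)}$. To double this, I will use the HST's equal depths. A request $r\in\cR(v)$ matched outside $T_v$ travels up through $e_v$ to some ancestor $w$ and then descends a path of identical length to a leaf. By the symmetry $d(w,r)=d(w,s)$, its cost equals $2d(w,r)$, which is twice the sum of the edge lengths on the ascending half. So I charge each request only for its ascending edges, counting each at double weight; this exactly equals its cost. The ascending half of $r$'s path contains $e_v$ precisely when $r\in\cR(v)$ and $\mu(r)\notin T_v$. By the counting above, there are at least $E(v)$ such requests. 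Summing $2\alpha^{h(v)}\cdot E(v)$ over all $v$ then yields the bound. The only points to check are that the ascending halves partition half the cost without overlap, which holds because each path has a unique top vertex $w$, and that no edge is double-counted across different $v$. Both follow from the bijection between nodes and parent edges.
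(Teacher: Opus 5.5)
Your proof is correct and is essentially the paper's argument. Each request $r$ matched to $s$ charges twice the length of every edge on its ascending path to the least common ancestor of $r$ and $s$, for a total of exactly $d(r,s)$; the edge above $v$ is then charged by at least $E(v)$ requests from $\cR(v)$ that are matched outside $T_v$. Your preliminary crossing count is a harmless detour, and your remark that $E(\mathrm{root})=0$ for any instance admitting a matching covers a point the paper leaves implicit.
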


\begin{proof}
    Fix a matching of the requests to servers. Consider any request $r$, and suppose it is matched to server $s$. Suppose $u$ is the least common ancestor of $r$ and $s$ in $T$. To every edge on the path from $r$ to $u$, let $r$ charge twice the length of the edge. The total amount charged by request $r$ is $d(r,s) = 2d(r,u)$. Thus the total amount charged by all the requests is the cost of the matching.

    Now consider the edge from some node $v$ in the HST to its parent. At least $E(v)$ requests charge this edge, and each such request charges this edge $2 \alpha^{h(v)}$. This completes the proof. 
 \end{proof}

We now show that the final matching produced by Algorithm~\ref{alg:NewStrategy} enjoys a strong locality property. This property holds for our algorithm because it inherits the basic structure of the BBGN algorithm. 
\begin{lemma}[Local Allocation]\label{lem:local_requests}
Consider the final matching produced by Algorithm~\ref{alg:NewStrategy} and any node $v$ in the HST. Exactly $E(v)$ requests from $\cR(v)$ are matched to a server outside $T_v$ (i.e., matched to a server not in $\cS(v)$).
\end{lemma}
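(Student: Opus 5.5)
Any matching sends at least $E(v)$ requests of $\cR(v)$ outside $T_v$, since $T_v$ has only $|\cS(v)|$ servers. So the plan is to prove only the upper bound: in the final matching of Algorithm~\ref{alg:NewStrategy}, at most $E(v)$ requests from $\cR(v)$ are matched outside $T_v$. Equality then follows.

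Call a request $r\in\cR(v)$ \emph{escaped} if it is matched outside $T_v$. We argue at the end of the algorithm, by contradiction.

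First observe that a request $r\in\cR(v)$ matched outside $T_v$ has $Level(r)=\ell > h(v)$. This holds because $r$'s server lies in $N(r,\ell)$, and $T(r,h(v))$ is exactly the leaf set of $T_v$. By Lemma~\ref{lem:previous-levels}, at the end we therefore have $A(r,h(v))=\emptyset$, and the same holds for every height $\ell'<\ell$. Taking $\ell'=0,1,\dots,h(v)$ shows that every server $s\in T_v$ satisfies $Level(s)\le h(r,s)$, where $h(r,s)$ denotes the height of the least common ancestor of $r$ and $s$ (this is at most $h(v)$). In particular, every server in $T_v$ has finite level, so every server in $T_v$ is matched.

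Now suppose more than $E(v)$ requests of $\cR(v)$ are escaped. Then fewer than $|\cR(v)|-E(v)\le|\cS(v)|$ requests of $\cR(v)$ are matched inside $T_v$. Since all $|\cS(v)|$ servers of $T_v$ are matched, some server $s\in T_v$ must be matched to a request $r''\notin\cR(v)$. Such an $r''$ lies outside $T_v$, so $Level(s)=Level(r'')$, which is the height of the least common ancestor of $r''$ and $s$, and this exceeds $h(v)$. But by the previous paragraph, $Level(s)\le h(r,s)\le h(v)$. This is a contradiction, so at most $E(v)$ requests are escaped.

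The main obstacle is justifying the claim that every server in $T_v$ has level at most $h(v)$ once some request of $\cR(v)$ has escaped. This is exactly what Lemma~\ref{lem:previous-levels}, applied at the end of the algorithm, provides, because server levels only decrease and $A(r,\ell')$ being empty is a statement about all servers in $N(r,\ell')$. The remaining steps are routine bookkeeping using the stated invariant that a matched pair $r,s$ with $s\in N(r,\ell)$ has $Level(r)=Level(s)=\ell$.
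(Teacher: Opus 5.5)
Your proof is correct and is essentially the paper's argument. In both, the core step is that an escaped request $r\in\cR(v)$ cannot coexist with a server $s\in\cS(v)$ having $Level(s)>h(v)$ (that is, free or matched outside $\cR(v)$); this follows from Lemma~\ref{lem:previous-levels} applied at the height of the least common ancestor of $r$ and $s$. The paper argues from the server side and finishes with a case split on $|\cR(v)|$ versus $|\cS(v)|$, whereas you take the contrapositive from the request side and finish with the trivial lower bound plus a counting contradiction, which is only a cosmetic difference.
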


\begin{proof}
Consider any point during the algorithm execution where some server in $s \in \cS(v)$ is either unmatched or matched to a request that is not in $\cR(v)$. Now consider any request $r \in \cR(v)$ that has already arrived, and let $w$ be the least common ancestor of $r$ and $s$ in the HST $T$. As $s \in A(r, h(w))$, we have $A(r, h(w)) \neq \emptyset$. From Lemma~\ref{lem:previous-levels}, we have $Level(r) \leq h(w) \leq h(v)$. So $r$ is matched to a server in $\cS(v)$. That is, every request in $r \in \cR(v)$ that has arrived by now is matched to a server in $\cS(v)$. As we'll see, the Lemma follows from this observation.

Consider the case that $|\cR(v)| < |\cS(v)|$. At the end of the algorithm, there is at least one server $s \in \cS(v)$ that is either unmatched or matched to a request not in $\cR(v)$. Thus, by the above observation, every request in $\cR(v)$ is matched to a server in $\cS(v)$, and the Lemma holds.

Now consider the case that $|\cR(v)| \geq |\cS(v)|$. Consider the matching at the end, after all requests have arrived. At this point in time, there cannot be a server $s \in \cS(v)$ that is either unmatched or matched to a request not in $\cR(v)$. Otherwise, the above observation implies that all $|\cR(v)|$ requests from $\cR(v)$ are matched to servers in $\cS(v) \setminus \{s \}$, which is impossible as $|\cR(v)| \geq |\cS(v)|$.

Thus, at the end, all servers in $\cS(v)$ are matched to requests in $\cR(v)$. Thus there are exactly $E(v) = |\cR(v)| - |\cS(v)|$ from $\cR(v)$ that are matched to requests not in $\cS(v)$.
 \end{proof}

\subsubsection{Cost of Algorithm in Reassignment Framework.}
\def\reqcost{\textsf{ReqCost}}

Let $u$ be some node in the HST. We say that a request $r \in \cR(u)$ {\em reaches}
$u$ if $Level(r)$ attains a value of at least $h(u)$ when the algorithm terminates. Let $\zeta(u) \subseteq \cR(u)$ denote the set of requests that reach $u$. Thus, if $u$ is an internal node, it follows from Lemma~\ref{lem:local_requests} that
\begin{equation}
 |\zeta(u)| = \sum_{v \in \mathrm{children}(u)} E(v) \label{eq:zeta}  
\end{equation}
Let $u$ be some node in the HST, and $r \in \cR(u)$ a request. We define $\reqcost(r,u)$ to be the total cost incurred by the algorithm in assigning servers from $A(r, h(u))$ to $r$. If $r$ does not reach $u$, $\reqcost(r,u) = 0$. For an $r$ that reaches $u$, recall that the first time $r$ is assigned a server from $A(r,h(u))$, that costs $2d(r,u)$. Subsequently, when $r$ is reassigned from server $s \in A(r,h(u))$ to another server $s' \in A(r,h(u))$, there is a cost of $d(s,s')$. The quantity $\reqcost(r,u)$ is the sum of the cost of the first assignment to a server in $A(r,h(u))$, and the costs of subsequent reassignments to servers in $A(r,h(u))$. (Note that for $r \in \zeta(u)$, it is still possible that $\reqcost(r,u) = 0$.)

In Appendix~\ref{SS:4.2}, which contains the technical core of the paper, we show that for some constant $c > 0$,
\begin{equation}
    \E[\reqcost(r,u) \mid r \in \zeta(u)] \leq c \cdot \log \log \Delta \cdot \log \Delta \cdot \alpha^{h(u) - 1} \label{eq:reqcost}
\end{equation}
In the remainder of this section, we use this to bound the competitive ratio of our reassignment algorithm. 

\noindent The total cost incurred by Algorithm~\ref{alg:NewStrategy} is
\[ \sum_{u \in T \setminus\{leaves\}} \sum_{r \in \zeta(u)} \reqcost(r,u)\]
The expected total cost is therefore
\[ \sum_{u \in T \setminus\{leaves\}} \sum_{r \in \cR(u)} \E[\reqcost(r,u) | r \in \zeta(u)] \cdot \Pr[r \in \zeta(u)] \]
Using~Eq.~\ref{eq:reqcost}, this is is upper bounded by
\begin{eqnarray*}
 & & O((\log \log \Delta) \log \Delta) \sum_{u \in T \setminus\{leaves\}} \alpha^{h(u) - 1} \sum_{r \in \cR(u)} \Pr[r \in \zeta(u)] \\
 & \leq &  O((\log \log \Delta) \log \Delta) \sum_{u \in T \setminus\{leaves\}} \alpha^{h(u) - 1} |\zeta(u)| \\
 & \leq & O((\log \log \Delta) \log \Delta) \sum_{u \in T \setminus\{leaves\}} \alpha^{h(u) - 1} \sum_{v \in \mathrm{children}(u)} E(v) \\
 & \leq & O((\log \log \Delta) \log \Delta) \sum_{v \in T} \alpha^{h(v)} E(v) \\
 & \leq & O((\log \log \Delta) \log \Delta) \cdot \OPT
\end{eqnarray*}

Here, the second inequality comes from~Eq.~\ref{eq:zeta} and the last one from Lemma~\ref{lem:lower-bound}.

\begin{theorem}\label{thm:comp-in-reassign}
    The expected total cost of the assignments and reassignments made by Algorithm~\ref{alg:NewStrategy} is $O((\log\log \Delta) \cdot  \log \Delta) \cdot \OPT$ on any $\Delta$-ary $2$-HST. Here, $\OPT$ is the cost of the optimal offline matching of the instance on the HST $T$. Furthermore, recall that Algorithm~\ref{alg:NewStrategy} is an algorithm in the reassignment model. 
\end{theorem}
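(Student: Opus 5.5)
The theorem is essentially the chain of inequalities displayed just before it, so the proof plan is to make that chain rigorous. I would take the per-node bound Eq.~\eqref{eq:reqcost} (proved in Appendix~\ref{SS:4.2}) as given, and then carry out the following steps in order.

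\textbf{Step 1: decompose the cost.} Every assignment or reassignment of a request $r$ happens while $Level(r)=h(u)$ for some internal ancestor $u$ of $r$, and it moves $r$ to a server of $A(r,h(u))$. Hence each such cost is counted in exactly one term $\reqcost(r,u)$, and the total cost equals $\sum_{u}\sum_{r\in\cR(u)}\reqcost(r,u)$. Requests that never reach $u$ contribute zero. Taking expectations and conditioning on the event $r\in\zeta(u)$ gives
\[
\sum_u\sum_{r\in\cR(u)}\E[\reqcost(r,u)\mid r\in\zeta(u)]\,\Pr[r\in\zeta(u)].
\]

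\textbf{Step 2: apply the per-node bound and remove the randomness in $\zeta(u)$.} By Eq.~\eqref{eq:reqcost}, the expected total cost is at most
\[
c\log\log\Delta\log\Delta\sum_u\alpha^{h(u)-1}\,\E|\zeta(u)|.
\]
The key point is that $|\zeta(u)|$ is deterministic. Lemma~\ref{lem:local_requests} holds for every run of the algorithm, so Eq.~\eqref{eq:zeta} gives $|\zeta(u)|=\sum_{v\in\mathrm{children}(u)}E(v)$ regardless of the random choices. This quantity depends only on the instance.

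\textbf{Step 3: reindex and compare with $\OPT$.} Reindex the sum by children: each non-root $v$ appears once, with weight $\alpha^{h(v)+1-1}=\alpha^{h(v)}$. Then Lemma~\ref{lem:lower-bound} gives $\sum_v\alpha^{h(v)}E(v)\le\OPT/2$. The only HST-specific input is $\alpha=2$, which enters only through Eq.~\eqref{eq:reqcost}.

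\textbf{Main obstacle.} All of the real difficulty sits in Eq.~\eqref{eq:reqcost}, and I would prove that bound separately. For fixed $u$ and $r$, split $\reqcost(r,u)$ into the initial cost $2d(r,u)=O(\alpha^{h(u)-1})$ plus reassignment costs, grouped by the node $v$ selected in Line~\ref{line:v}. A reassignment that uses $v$ costs at most $2d(v,s)=O(\alpha^{h(v)-1})$.

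For each $v$, I would bound the expected number of distinct immediate special descendants of $v$ that $r$ descends into by $O(\log|\isd(v)|)=O(\gap\log\Delta)$. The intended argument is a harmonic-type bound: whenever $r$ picks uniformly among $k$ available children, in expectation $\Omega(k)$ of them become unavailable before $r$ leaves the chosen one. I would establish this through the pairwise coupling argument (Lemma~\ref{lem:key}), which I expect to be the hardest step.

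Summing over $v$, the number of ancestors $v$ at special height $h$ below $u$ does not matter, because $r$ sits under only one chain of them. The separation $\gap=2\log\log\Delta$ makes $\alpha^{-\gap}=(\log\Delta)^{-2}$ small enough that the geometric series over special layers is dominated by its top term. That top term is $O(\gap\log\Delta\,\alpha^{h(u)-1})=O(\log\log\Delta\log\Delta\,\alpha^{h(u)-1})$, which is the bound in Eq.~\eqref{eq:reqcost}.
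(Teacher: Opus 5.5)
Your Steps 1--3 are correct and are exactly the paper's derivation. You decompose the cost into the terms $\textsf{ReqCost}(r,u)$, apply Eq.~\ref{eq:reqcost}, and use Lemma~\ref{lem:local_requests} to see that $|\zeta(u)|=\sum_{v\in\mathrm{children}(u)}E(v)$ on every run, so $\sum_{r}\Pr[r\in\zeta(u)]$ equals this deterministic quantity. You then reindex by children and finish with Lemma~\ref{lem:lower-bound}.

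Your sketch of Eq.~\ref{eq:reqcost}, however, has a wrong step in the aggregation over special layers. It is not true that the number of special descendants $v$ of $u$ in a given layer ``does not matter because $r$ sits under only one chain of them.'' At a single reassignment the candidates in $B$ do form one chain, namely the special ancestors of the current server $s$. But over the whole period with $Level(r)=h(u)$, the server of $r$ migrates across subtrees. Every node $w$ that $r$ descends into can later serve as the Line~\ref{line:v} node, and its $\isd(w)$ can itself be descended into $O(\gap\log\Delta)$ times in expectation (Corollary~\ref{clm:logbound}).

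Let $\cY_p$ be the number of nodes in the $p$-th special layer below $u$ that $r$ descends into. All you can get is $\E[\cY_p]\le O(\gap\log\Delta)\cdot\E[\cY_{p-1}]$, so $\E[\cY_p]$ may grow like $(O(\gap\log\Delta))^{p}$. The cost per descent drops by a factor $\alpha^{\gap}=(\log\Delta)^2$ per layer. The series therefore converges only because the ratio $O(\gap\log\Delta)/(\log\Delta)^2=O(\log\log\Delta/\log\Delta)$ is below $1$. This is precisely why $\gap$ must be $\Theta(\log\log\Delta)$, and it is where the extra $\log\log\Delta$ factor comes from.

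As a sanity check, your one-chain argument would make this trade-off vacuous. With $\gap=1$, each $\isd(v)$ has at most $\Delta$ nodes and per-layer costs halve, so it would give an $O(\log\Delta)$ bound, which the paper does not claim. Replace that step with the branching recursion on $\E[\cY_p]$, as in the proof of Lemma~\ref{lemma:requestcost}. Your conclusion that the top term dominates is then correct, but only for this reason.
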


\subsection{Bounding $\reqcost(r,u)$} \label{SS:4.2}

We introduce some terminology for the analysis here. Let $r'$ be any request, and $x$ be a node in some special layer.
\begin{itemize}
    \item We say that $r'$ {\em descends} into $x$ if within some invocation of procedure $\RSALLOCATION(r',\cdot,\cdot)$, $x$ is chosen as a consequence of the uniform random choice in Line~\ref{line:randomchoice-ours} and assigned to $w$ in Line~\ref{line:updateW}. This means that $r'$ is about to be assigned to a server from $T_x$. Let $u$ denote the least common ancestor of $r'$ and $x$; at this time, then, $Level(r') = h(u)$, and $x$ is a special descendant of $u$. We use $r' \downarrow x$ as notation for ``$r'$ descends into $x$''.
    \item Once $r'$ is assigned a server from $T_x$ it continues to be matched to a server in $A(r',h(u)) \cap \cS(x)$ as long as $A(r', h(u)) \cap \cS(x) \neq \emptyset$. This is an important notion of locality that our algorithm enjoys. Once $A(r', h(u)) \cap \cS(x)$ has become empty, any reassignment of $r'$ will then cause it to be assigned to a server not in $T_x$. We say that $r'$ {\em ascends out of} $x$ (denoted by $r' \uparrow x$) at this time. 
\end{itemize}
  Thus, $r'$ descends into $x$ at most once; if it descends into $x$ then it ascends out of $x$ at most once. In between these two events $r'$ is assigned only to servers within the subtree $T_x$.

  For a request $r'$, and a node $w$ such that $r' \not\in \cR(w)$, we say that $w$ is {\em eligible for} $r'$ if there is at least one server in $s \in \cS(w)$ with $Level(s) > Level(r')$. 

Let $r$ be an arbitrary request, and $u$ be a non-leaf ancestor node of $r$ such that $r$ reaches $u$, i.e., $Level(r) \geq h(u)$ eventually. 
In this section we bound the expected value of $\reqcost(r,u)$. It can happen that $Level(r)$ is never set to $h(u)$ for such a request $r$; this transpires if $A(r,h(u))$ is empty when $Level(r)$ first becomes greater than $h(u) - 1$. If $Level(r)$ never equals $h(u)$, then $\reqcost(r,u) = 0$. Thus, we will assume that $Level(r)$ equals $h(u)$ at some point; recall that $\reqcost(r,u)$ is the sum of the costs of assigning servers to $r$ when $Level(r) = h(u)$.

Let $\SpServU$ denote the set consisting of $u$ and every special descendant of $u$ that {\em is not} in $T(r, h(u) -1)$. Note that the set of leaves in $\SpServU$ is exactly $N(r,h(u))$. 

\subsubsection{Analysis for Fixed Special Descendant of $u$.}
\label{sec:fixed-v}
Consider any node \(v \in \SpServU\) such that $r$ descends into some node in $\isd(v)$. To bound \reqcost\((r,u)\), we would like to bound the expected number of immediate special descendants of $v$ (that is, nodes in $\isd(v)$) into which $r$ descends; see Corollary~\ref{clm:logbound} below. For this, it suffices to show that every time 
\(r\) descends into a node in $\isd(v)$, by the time it ascends out of it, the number of eligible nodes (for $r$) in $\isd(v)$ is reduced by more than half.

Fix a particular instant where $r$ is about to choose a node from node $\isd(v)$ to descend into. We define 
the following three events:

\begin{itemize}
    \item \({\sf E}_{\downarrow w}:=~r\text{ descends into }w\).
    \item \({\sf E}_{\downarrow w \uparrow}:=\)  \({\sf E}_{\downarrow w}\) happens and then, in some future time instant, $r$ or any other \(h(u)\)-level request ascends out of $w$ for the first time.
    \item \({\sf E}_{\downarrow \uparrow}:=\) $r$ descends into some node and later some  $h(u)$-level request ascends out of the same node for the first time.
\end{itemize}

In the remainder of Section~\ref{SS:4.2}, whenever we say that a node $w$ is eligible (resp. ineligible), without mentioning a corresponding request, we mean that it is eligible (resp. ineligible) for $r$. Thus, a node $w \in \isd(v)$ is eligible iff there is at least one server $s \in \cS(w)$ such that $Level(s) > h(u)$. 

For any two  \(x,y \in \isd(v)\), we define the following event,
which is an event in the probability space ${\sf E}_{\downarrow x}$:
\begin{equation}
    \label{eqn:keyQuantity}
    {\sf Q}_{x,y}:= {\text{ Either }} {\sf E}_{\downarrow x \uparrow} {\text{ does not happen, or }} 
y {\text{ is ineligible}} \text{ after } {\sf E}_{\downarrow x \uparrow} {\text{ happens}}. 
\end{equation}

The following technical lemma is at the heart of our proof. 
\begin{restatable}{lemma}{mainlemma}
\label{lem:key}
    For any \(x,y\), we have:
    \[ \Pr\big[{\sf Q}_{x,y}\big] + \Pr\big[{\sf Q}_{y,x}\big] \geq 1\]
\end{restatable}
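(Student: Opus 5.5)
We prove the lemma with a coupling over the randomness external to $r$'s choice at this instant.

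\textbf{Setup.} Fix the instant at which $r$ is about to pick uniformly among the eligible nodes of $\isd(v)$, and fix all randomness used by every other call to $\RSALLOCATION$ (and all later random choices of $r$ itself inside the chosen subtree), say through a fixed random tape per (request, invocation) pair. Then $\Pr[{\sf Q}_{x,y}]$ is a probability over this tape, conditioned on ${\sf E}_{\downarrow x}$. Condition on ${\sf E}_{\downarrow x}$ and on ${\sf E}_{\downarrow y}$ in parallel: both worlds use the same tape. It then suffices to show the pointwise statement that, for each fixed tape $\omega$, at least one of ${\sf Q}_{x,y}(\omega)$ (in world $X$, where $r\downarrow x$) or ${\sf Q}_{y,x}(\omega)$ (in world $Y$, where $r\downarrow y$) holds. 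Integrating gives $\Pr[{\sf Q}_{x,y}]+\Pr[{\sf Q}_{y,x}]\ge 1$.

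\textbf{Comparing the two worlds.} Consider how the worlds differ. In world $X$, $r$ occupies a server of $T_x$ at level $h(u)$; in world $Y$ it occupies one in $T_y$. The key invariant is this: as long as in world $X$ node $x$ has not yet been ascended out of by an $h(u)$-level request, and in world $Y$ node $y$ has not, the two executions differ only inside $T_x\cup T_y$. In particular, the multiset of level-$h(u)$ demands passing through $v$ is the same in both worlds. The difference is a single ``token'' of occupancy that sits in $T_x$ in $X$ and in $T_y$ in $Y$.

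To prove the invariant, I would induct over the sequence of steps. At each step, check that any request whose random descent or $\REASSIGN$ walk does not touch $T_x\cup T_y$ behaves identically. Also check that a request entering $T_x$ or $T_y$ behaves identically except for which specific server it displaces; this is where locality of $\REASSIGN$ (staying inside a subtree until its available set at that level empties) is used. Eligibility of nodes outside $T_x\cup T_y$ therefore agrees in both worlds.

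\textbf{Case analysis at the first divergence.} Let $t$ be the first time at which, in one of the worlds, an $h(u)$-level request ascends out of the node $r$ descended into. Say this happens first in world $X$, out of $x$ (the case of $Y$ is symmetric). Up to that time the worlds agree outside $T_x\cup T_y$, and $x$ has one less available level-$>h(u)$ server in $X$ than in $Y$, while $y$ has one more. Ascending out of $x$ in $X$ means $x$ became ineligible there.

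\begin{itemize}
\item If $y$ is also ineligible in $X$ at that time, then ${\sf Q}_{x,y}$ holds.
\item Otherwise $y$ is eligible in $X$. Then in $Y$, $x$ has one extra server, so no ascent from $x$ has occurred; meanwhile $y$ in $Y$ has one fewer available server than in $X$. The goal here is to show that ${\sf E}_{\downarrow y\uparrow}$ either never happens in $Y$, or happens only after $x$ has become ineligible in $Y$, which gives ${\sf Q}_{y,x}$.
\end{itemize}

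I expect this second branch to be the main obstacle. The difficulty is that after time $t$ the worlds can genuinely diverge, because in $X$ the displaced request leaves $T_x$ and makes a fresh uniform choice among eligible nodes, whereas in $Y$ it does not. To handle it, I would try a monotonicity argument. Track the quantity ``number of available servers in $T_x$ and $T_y$'' and show that the displaced request in $X$ corresponds, in $Y$, to the request that will later exhaust $T_x$, using the same tape entries. Any choice made in $X$ that lands in $y$ corresponds in $Y$ to one that drains $y$ before $x$ or vice versa. If this one-token coupling breaks down, the fallback is to define ${\sf Q}$ via first times and argue by a sequence of exchange steps that the ordering of the ineligibility times of $x$ and $y$ is reversed between the worlds.
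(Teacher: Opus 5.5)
\textbf{The gap is the invariant in your ``Comparing the two worlds'' step.} It is false for the coupling you propose, and so the pointwise reduction fails.

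Your inductive check only separates requests whose walks enter $T_x\cup T_y$ from those that do not. But a level-$h(u)$ request making its uniform choice at $v$ depends on $x$ and $y$ through the eligibility test even when it ends up elsewhere. The two worlds disagree on that test whenever $y$ has exactly one spare in $X$ (hence none in $Y$), or $x$ has exactly one spare in $Y$.

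Here is a concrete failure. Suppose that when $r$ chooses, $x$ has two spare servers (level $>h(u)$), and $y$ and $z\in\isd(v)$ have one each.
\begin{itemize}
\item A later level-$h(u)$ request $r'$ choosing at $v$ sees $\{x,y,z\}$ in $X$ and $\{x,z\}$ in $Y$.
\item If the tape picks the $\lceil Uk\rceil$-th eligible node in the order $y,x,z$, then for $U\in(1/2,2/3]$ it sends $r'$ to $x$ in $X$ and to $z$ in $Y$. The worlds therefore already differ in $T_z$ before any ascent.
\item Now let a new request arrive at the leaf of one of $x$'s two servers, and then another at the leaf of $r$'s server in $T_y$.
\item In $X$, an $h(u)$-level request ascends out of $x$ while $y$ still has a spare, so ${\sf Q}_{x,y}$ fails.
\item In $Y$, the first arrival just uses a free server of $x$. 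Then $r$ ascends out of $y$ while $x$ still has a spare, so ${\sf Q}_{y,x}$ fails.
\end{itemize}
So on this tape neither event holds.

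\textbf{The missing idea is how to handle this asymmetry in choice sets.} In the paper it is the case $|F|=|D|+1$ of an induction over transcripts. The extra option $y$ has probability $\tfrac{1}{|D|+1}$ in $X$. Since $y$ has exactly one spare there, choosing it secures ${\sf Q}_{x,y}$ with certainty. That mass exactly pays for the loss $|D|\bigl(\tfrac{1}{|D|}-\tfrac{1}{|D|+1}\bigr)$ on the shared options. When $Y$ has at least as many options, the only ones that matter are available in $X$ with at least the same probability.

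In your coupling language, you would need to couple each uniform choice adaptively. For instance, use a shared random priority order on the candidate nodes and take the first eligible one. Under that coupling the worlds agree unless one of them picks the node eligible only in it, and that pick secures that world's event.

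Once this one-token invariant is in place, the branch you flagged as the main obstacle is immediate. Suppose $X$ ascends out of $x$ first. Then $x$ had no spare in $X$, hence exactly one in $Y$. The same low-level assignment in $T_x$ consumes that spare in $Y$, either directly or through the local reassignment of a displaced $h(u)$-level request. So $x$ is ineligible in $Y$ before $Y$ can ascend out of $y$, which gives ${\sf Q}_{y,x}$.

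As written, the proposal establishes neither the invariant nor this branch, so the lemma is not proved.
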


Let \({\sf N}_{\text{eligible}}\) be the number of eligible nodes in $\isd(v)$. Note, 

\begin{lemma}\label{lem:numberofjumpsF}
\[
\E\big[ 
{\sf N}_{\text{eligible}} \text{ at end of } {\sf E}_{\downarrow \uparrow}
\big] 
~~\leq~~
\frac{1}{2}\cdot  
{\sf N}_{\text{eligible}} \text{ at start of } {\sf E}_{\downarrow \uparrow}
\]
\end{lemma}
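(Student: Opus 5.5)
We derive Lemma~\ref{lem:numberofjumpsF} from Lemma~\ref{lem:key} by linearity of expectation over pairs of eligible nodes.

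Fix the instant at which $r$ is about to choose a node of $\isd(v)$ to descend into. Let $\mathcal{E}$ be the set of eligible nodes in $\isd(v)$ at that instant, and let $k=|\mathcal{E}|$. By Line~\ref{line:randomchoice-ours}, $r$ picks $x\in\mathcal{E}$ uniformly at random, so $\Pr[{\sf E}_{\downarrow x}] = 1/k$ for each $x\in\mathcal{E}$. Here we use that a node of $\isd(v)$ containing a server of $A(r,h(u))$ is exactly an eligible node; servers in $T(r,h(u)-1)$ are irrelevant because $v\in\SpServU$.

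Nodes only go from eligible to ineligible, since server levels never increase. Hence at the end of ${\sf E}_{\downarrow\uparrow}$ every eligible node lies in $\mathcal{E}$. Condition on ${\sf E}_{\downarrow x}$. The node $x$ itself is ineligible once $r$ or another $h(u)$-level request ascends out of it, since ascending means $A(\cdot,h(u))\cap\cS(x)=\emptyset$. For $y\neq x$, the event ${\sf Q}_{x,y}$ says that $y$ is ineligible at that time whenever ${\sf E}_{\downarrow x\uparrow}$ happens. Therefore, on ${\sf E}_{\downarrow x\uparrow}$, the number of eligible nodes at the end is at most
\[ \sum_{y\in\mathcal{E}\setminus\{x\}} \mathbf{1}[\lnot {\sf Q}_{x,y}]. \]
If ${\sf E}_{\downarrow x\uparrow}$ never happens, then ${\sf E}_{\downarrow\uparrow}$ has no end point. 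In that case we measure the count at termination of the run, or equivalently we take the statement to be conditioned appropriately. The convention in ${\sf Q}_{x,y}$ (``either ${\sf E}_{\downarrow x\uparrow}$ does not happen, or...'') treats this case as good for every $y$. We adopt that convention, so the count is then charged as $0$ apart from $x$ itself.

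Taking expectations gives
\[ \E[{\sf N}_{\text{eligible}}\text{ at end}] \le \frac1k\sum_{x\in\mathcal{E}}\sum_{y\ne x}\Pr[\lnot{\sf Q}_{x,y}\mid {\sf E}_{\downarrow x}]. \]
Group the ordered pairs into unordered pairs $\{x,y\}$. Lemma~\ref{lem:key} gives $\Pr[\lnot{\sf Q}_{x,y}]+\Pr[\lnot{\sf Q}_{y,x}]\le 1$, where the probabilities are in the respective spaces ${\sf E}_{\downarrow x}$ and ${\sf E}_{\downarrow y}$. The double sum is therefore at most $\binom{k}{2}$, so the expectation is at most $\frac{k-1}{2}\le \frac k2$, as claimed.

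The main obstacle is bookkeeping rather than a new argument. We must make sure the probability spaces in Lemma~\ref{lem:key} match the conditioning on ${\sf E}_{\downarrow x}$ used here, and we must handle the case where no ascent ever occurs. Both are settled by the definition of ${\sf Q}_{x,y}$ as an event in the space ${\sf E}_{\downarrow x}$, together with the monotonicity of eligibility.
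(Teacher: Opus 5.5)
Your proof is essentially the paper's: average over the uniform choice of $x$, bound the eligible count after the ascent by $\sum_{y\neq x}\mathbf{1}[\lnot{\sf Q}_{x,y}]$, group ordered pairs into unordered ones, and apply Lemma~\ref{lem:key} to get ${k\choose 2}/k=(k-1)/2$. One correction on the no-ascent case: measuring the count ``at termination of the run'' is not equivalent to the convention you then adopt, because some $y$ could still be eligible at termination while ${\sf Q}_{x,y}$ holds vacuously. It is the convention of charging $0$ when no ascent occurs that makes your displayed inequality valid; this is what the paper does implicitly, and it suffices for the Markov argument in Corollary~\ref{clm:logbound}.
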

\begin{proof}
Fix any event \({\sf E}_{\downarrow \uparrow}\). 
Let \(S\) be the set of 
eligible nodes in $\isd(v)$ that the request $r$ could descend into, just before the beginning 
\({\sf E}_{\downarrow \uparrow}\). Clearly, we have \( |S|={\sf N}_{\text{eligible}} \text{ at start of } {\sf E}_{\downarrow \uparrow}\). 

We have,
  \begin{align}\label{updown1}
      \E\big[ 
{\sf N}_{\text{eligible}} \text{ at end of } {\sf E}_{\downarrow \uparrow}
\big] 
      &=\sum_{x\in S} \Pr[{\sf E}_{\downarrow x }]\cdot 
      \E\big[ 
{\sf N}_{\text{eligible}} \text{ after } {\sf E}_{\downarrow x \uparrow}~|~{\sf E}_{\downarrow x }
\big] \notag \\
&=\frac{1}{|S|}\cdot \sum_{x\in S} 
      \E\big[ 
{\sf N}_{\text{eligible}} \text{ after } {\sf E}_{\downarrow x \uparrow}~|~{\sf E}_{\downarrow x }
\big]\notag
\\&=\frac{1}{|S|}\cdot 
\sum_{x\in S}
\sum_{y\in S\setminus\{x\}}
 \Pr\big[ 
y {\text{ is eligible}} \text{ after } {\sf E}_{\downarrow x \uparrow}~|~{\sf E}_{\downarrow x}
\big]
  \end{align}  
We have the first equality because any \({\sf E}_{\downarrow \uparrow}\) must happen due to 
descent of \(r\) to some eligible child \(x\) and then subsequently, first time ascension of some \(h(u)\)-level request out of \(x\).

Combining the last expression with~Eq.~\ref{eqn:keyQuantity} we get,
\begin{align*}
    Eq.~\ref{updown1} &= \frac{1}{|S|} \cdot \sum_{x,y\in S, \text{ s.t.,}x\neq y} 
    \Pr [{\sf Q}^c_{x,y}] + \Pr[{\sf Q}^c_{y,x}]
    \\&\leq  \frac{1}{|S|} \cdot \sum_{x,y\in S, \text{ s.t.,}x\neq y} 1 && \text{ By Lemma~\ref{lem:key}}. 
    \\&= \frac{1}{|S|}  \cdot {|S| \choose 2}
    \\&= \frac{|S|-1}{2} 
\end{align*}
Here, ${\sf Q}^c_{x,y}$ denotes the complement of ${\sf Q}_{x,y}$ in the probability space ${\sf E}_{\downarrow x}$. The claim follows.  \end{proof}
Now we prove our key claim. 

\mainlemma*

\begin{proof}
From the key quantity we have: 
\[ \Pr\big[{\sf Q}_{x,y}\big] + \Pr\big[{\sf Q}_{y,x}\big]= 1+ 
\big(\Pr\big[{\sf Q}_{x,y}\big] -\Pr\big[{\sf Q}^c_{y,x}\big]\big)
\]
Therefore it suffices to show that 
\begin{equation}\label{eq:compare}
\Pr\big[{\sf Q}_{x,y}\big] \geq \Pr\big[{\sf Q}^c_{y,x}\big] 
\end{equation}

We show this inequality for each history of the algorithm that leads to the choice facing $r$ at the moment: whether to descend into $x$ or $y$ or some other immediate special descendant ($\isd$) of $v$. We will need to map subsequent algorithmic traces within ${\sf Q}^c_{y,x}$ to traces within ${\sf Q}_{x,y}$. For this purpose, we make certain simplifications. We assume that no past or future request arrives at a leaf that is not within subtree $T_u$. This is without loss of generality because any such request that is matched to a server in $\cS(u)$ will have level greater than $h(u)$, and thus will not affect the events under consideration here. We can delete past requests that arrived outside of subtree $T_u$ and set the levels of the servers matched to them to $\infty$.

We are interested in the algorithmic traces in two scenarios: when the request $r$ under consideration descends into $x$, and when $r$ descends into $y$. One further simplification we need to make is that for requests of level $h(u)$ that are assigned to either a server in $\cS(x)$ or $\cS(y)$, we will not track the precise server assigned to those requests. We will only record that such requests are assigned to either $x$ or $y$, as appropriate. 
For any server $s$ in $\cS(x)$ or $\cS(y)$ that the algorithm has in the past assigned to a request at level $h(u)$, we set $Level(s)$ to $\infty$. Thus,

\begin{enumerate}
    \item the node $x$ is eligible for $h(u)$-level requests if the number of requests at level $h(u)$ assigned to $x$ is smaller than the number of servers in $\cS(x)$ with level $\infty$.
    \item a request with level $h(u)$ ascends out of $x$ for the first time when a request with level less than $h(u)$ is assigned to a server in $\cS(x)$ with level $\infty$, {\em after} $x$ has become ineligible for $h(u)$-level requests.
\end{enumerate}

The same notions are defined similarly for node $y$. We will stop the algorithm when a request with level $h(u)$ ascends out of $x$, or $y$, for the first time. Stopping the algorithm at these instants is enough to determine if the relevant event ${\sf Q}_{x,y}$ or ${\sf Q}_{x,y}$ happens.

We can now formalize an algorithmic trace using the notion of a {\em transcript}. A transcript records the full sequence of the following two types of events: (i) an arrival event, when a new request arrives at some leaf; (ii) a random choice event, where some request $r'$  makes a uniform random choice as to which eligible $\isd$ of some node $w'$ to descend into. In the world where $r$ is assigned to $x$ (resp. $y$), any transcript begins with the next arrival event. Transcripts may be partial, that is, they need not go on till the algorithm stops. But a transcript contains the full history of past random choices.

Now consider a transcript $\sigma$ and a random choice event in it, where some request $r'$ makes a uniform random choice as to which eligible $\isd$ of some node $w'$ to descend into. Supposing there are $k$ such eligible $\isd$s, the probability that $r'$ made the particular choice that is described in the random choice event is $1/k$. Note that $\Pr[\sigma]$ is the product of the probabilities of the random choice events in $\sigma$. 

Let $\Sigma$ be the set of all transcripts $\sigma$ such that in the world where $r \downarrow y$, there is at least one extension of $\sigma$ that results in the event ${\sf Q}_{y,x}^c$. That is, $\Pr \big[ {\sf Q}_{y,x}^c ~|~r \downarrow y, \sigma \big] > 0$. We show by induction that for each $\sigma \in \Sigma$,
\begin{equation} \label{eq:compare-sigma}
 \Pr \big[ {\sf Q}_{x,y} ~|~r \downarrow x, \sigma \big]   \geq \Pr \big[ {\sf Q}_{y,x}^c ~|~r \downarrow y, \sigma \big]
\end{equation}

Notice that for the case where $\sigma$ is the empty transcript, Inequality~\ref{eq:compare-sigma} yields Inequality~\ref{eq:compare}, thus completing the proof. The following observation is used repeatedly below.

\begin{lemma}\label{lemma:x-available}
  For any $\sigma \in \Sigma$, (a) $x$ is still eligible for $h(u)$-level requests after $\{ r \downarrow y, \sigma \}$; (b) $\sigma$ is a valid transcript in the world where $r \downarrow x$, that is, $\Pr \big[ \sigma ~|~r \downarrow x \big] > 0$.
\end{lemma}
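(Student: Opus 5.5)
We prove (a) first and then derive (b) from it; both are read off from the definition of $\Sigma$ together with the simplified bookkeeping in which only the counts of $h(u)$-level requests sitting in $x$ and in $y$ are tracked.

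For (a), fix $\sigma \in \Sigma$. By definition, some extension $\sigma'$ of $\sigma$ in the world $r \downarrow y$ realizes ${\sf Q}^c_{y,x}$. That is, ${\sf E}_{\downarrow y \uparrow}$ happens along $\sigma'$, and $x$ is still eligible at the moment some $h(u)$-level request first ascends out of $y$. The algorithm is stopped at the first ascent out of $x$ or $y$. Also, $\sigma$ ends before that stopping instant, because the extension $\sigma'$ continues up to the first ascent out of $y$. Eligibility of $x$ is monotone: server levels never increase, and in the simplified model the number of $h(u)$-level requests assigned to $x$ only grows until the first ascent out of $x$, which has not happened. Hence if $x$ were ineligible after $\{r\downarrow y,\sigma\}$, it would remain ineligible through $\sigma'$, contradicting ${\sf Q}^c_{y,x}$. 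So $x$ is eligible after $\{r \downarrow y, \sigma\}$.

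For (b), we run the two worlds side by side along $\sigma$ and compare their states. The only difference at the start is one $h(u)$-level request (namely $r$) counted in $x$ instead of in $y$; all other server levels and assignments are identical. We claim by induction over the events of $\sigma$ that this stays the only difference, as long as $x$ remains eligible in the $y$-world. The argument goes through the two event types.

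\emph{Arrival events.} These are deterministic given the state, except for which node is chosen. A request at level below $h(u)$ landing in $x$ or $y$ reduces the count of level-$\infty$ servers equally in both worlds. It could trigger a reassignment out of $y$ in the $x$-world or out of $x$ in the $y$-world only after that node became ineligible. In the $y$-world, $x$ ineligible is excluded by (a) applied to prefixes of $\sigma$; the $\Sigma$ condition is inherited by prefixes. In the $x$-world, $x$ has one more occupant than in the $y$-world, so its eligibility is the delicate point. Here we use that $x$ is eligible in the $y$-world with slack: the condition is strict inequality, and the $x$-world count is exactly one larger.

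\emph{Random choice events.} For the choice sets of eligible $\isd$'s at any node to agree, we need: every node's eligibility is the same in both worlds, except possibly $x$ and $y$. Eligibility of $y$ differs in the favourable direction ($y$ has one fewer occupant in the $x$-world). Eligibility of $x$ needs the $x$-world count to still be below capacity, which is the step I expect to be the main obstacle. I would try to show that a random choice by an $h(u)$-level request in $\sigma$ picking $x$ (or $y$) with nonzero probability in one world forces it in the other. If the $x$-world loses eligibility of $x$ exactly when the $y$-world has one spare slot, I would instead argue that such a choice set mismatch makes $\sigma$ end, because the next event in the $y$-world then fills $x$ and prevents ${\sf Q}^c_{y,x}$. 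Concluding, each event in $\sigma$ has positive probability in the $x$-world, so $\Pr[\sigma \mid r\downarrow x] > 0$.
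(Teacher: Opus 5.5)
Your part (a) is correct. So is the arrival-event part of (b). Both follow the paper's reasoning, which is terse: (a) follows from $\Pr[{\sf Q}^c_{y,x}\mid r\downarrow y,\sigma]>0$, with the monotonicity left implicit. For (b), $r\downarrow x$ only shifts one occupant between $y$ and $x$, and strict eligibility of $x$ in the $y$-world keeps the $x$-world count in $x$ within capacity, so no ascent out of $x$ is triggered.

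The genuine gap is at random-choice events. You flag this case as the main obstacle and leave it unresolved, and the resolution you sketch is wrong in two ways. First, you aim for the choice sets $D$ (in the $y$-world) and $F$ (in the $x$-world) to agree, or for $x$ to stay eligible in the $x$-world. Neither is needed, and neither holds in general. $x$ is legitimately full in the $x$-world whenever the $y$-world has exactly one free slot in $x$. A mismatch between $D$ and $F$ only changes probabilities, and that is handled later in the inductive step of the proof of Lemma~\ref{lem:key}. For $\Pr[\sigma\mid r\downarrow x]>0$ you only need each node actually recorded in $\sigma$ to be eligible in the $x$-world; this is exactly the fact $\hat D\subseteq F$ used downstream. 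Second, your fallback claim is false: a mismatch $x\in D\setminus F$ does not make $\sigma$ end, and the next event need not fill $x$. The request $r'$ may pick any other node of $D$, and $\sigma$ simply continues.

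The missing step is the same slack argument you used for arrivals, applied to a prefix of $\sigma$.
\begin{itemize}
\item For a request $r'$ of level below $h(u)$, or for a recorded choice other than $x$ and $y$, eligibility is identical in both worlds.
\item A recorded choice of $y$ is only easier in the $x$-world, since $y$ has one fewer occupant there.
\item Suppose an $h(u)$-level $r'$ records $r'\downarrow x$, and let $\sigma_1$ be the prefix of $\sigma$ ending with this choice. Every extension of $\sigma$ is an extension of $\sigma_1$, so $\sigma_1\in\Sigma$. By (a), $x$ is still eligible after $\{r\downarrow y,\sigma_1\}$. The choice raised the $y$-world count in $x$ by one and did not change the number of level-$\infty$ servers in $\cS(x)$. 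So just before the choice the $y$-world had at least two free slots in $x$. The $x$-world count is one larger, so it had at least one free slot, and $x$ was eligible there.
\end{itemize}
Equivalently, when the $y$-world has exactly one free slot in $x$, a $\sigma\in\Sigma$ cannot record a descent into $x$ at that point. That prefix would leave $x$ ineligible for the rest of the run and so fall outside $\Sigma$.
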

\begin{proof}
    Part (a) follows because $\Pr \big[ {\sf Q}_{y,x}^c ~|~r \downarrow y, \sigma \big] > 0$. For Part (b), the key point is that $r \downarrow x$ only affects the eligibility of $x$ for $h(u)$-level requests. Furthermore, as $x$ is still eligible for $h(u)$-level requests after $\{ r \downarrow y, \sigma \}$, $\{ r \downarrow x, \sigma \}$ never triggers the ascent out of $x$ of any $h(u)$-level request.
 \end{proof}
The base case for the inductive proof of Inequality~\ref{eq:compare-sigma} is when $\sigma$
is maximal in $\Sigma$: there is no proper extension of $\sigma$ that is also in $\Sigma$. 
This means that $\{ r \downarrow y, \sigma \}$ causes the ascent out of $y$ of some $h(u)$-level request. This means that,

\begin{enumerate}

\item The last event in $\sigma$ is either (a) the arrival of a new request at a leaf resulting in the assignment of that request to a server in $\cS(y)$ with level $\infty$ at that same leaf, or (b) a random choice event where some request at level strictly less than $h(u)$ descends to a leaf, and is assigned a server in $\cS(y)$ with level $\infty$ at that same leaf. 

\item Prior to this last event, $y$ had already become ineligible for requests at level $h(u)$.
\end{enumerate}

Now consider the transcript $\sigma $ in the world where $r \downarrow x$. After $\{r \downarrow x, \sigma \}$, $y$ becomes ineligible for $h(u)$-level requests as a consequence of the last event in $\sigma$. Thus, $\Pr \big[ {\sf Q}_{x,y} ~|~r \downarrow x, \sigma \big] = 1$, and Inequality~\ref{eq:compare-sigma} holds.

Now consider the inductive step, where $\sigma$ is not maximal in $\Sigma$. Here the proof splits into two cases, based on whether the next event is an arrival event or a random choice event.

\begin{enumerate}
    \item After $\{ r \downarrow y, \sigma \}$, the algorithm waits for an arrival event. Let us denote by $e$ the event where the next request arrives. Then by Lemma 6 (ii), after $\{ r \downarrow x, \sigma \}$, the next event is also the arrival event $e$. Then,
    \begin{eqnarray*}
        \Pr \big[ {\sf Q}_{x,y} ~|~r \downarrow x, \sigma \big]   & = &  \Pr \big[ {\sf Q}_{x,y} ~|~r \downarrow x, \sigma e  \big] \cdot \Pr \big[ \sigma e ~|~r \downarrow x, \sigma \big] \\
        & \geq &  \Pr \big[ {\sf Q}_{y,x}^c  ~|~r \downarrow y, \sigma e  \big] \cdot \Pr \big[ \sigma e ~|~r \downarrow x, \sigma \big] \\
        & = &  \Pr \big[ {\sf Q}_{y,x}^c  ~|~r \downarrow y, \sigma e  \big] \cdot \Pr \big[ \sigma e ~|~r \downarrow y, \sigma \big] \\
        & = &  \Pr \big[ {\sf Q}_{y,x}^c ~|~r \downarrow y, \sigma \big]
    \end{eqnarray*}

    Here, the inequality follows from the inductive hypothesis, and the next equality because $ \Pr \big[ \sigma e ~|~r \downarrow x, \sigma \big] = \Pr \big[ \sigma e ~|~r \downarrow y, \sigma \big] = 1$.

    \item After $\{ r \downarrow y, \sigma \}$, the algorithm is about to choose, for some request $r'$, and for some node $w'$, which eligible node in $\isd(w')$ to descend into. Let $D \subseteq \isd(w')$ denote the subset of eligible nodes for $r'$ at this point, and $\hat{D} = \{ z \in D~|~ \sigma; r' \downarrow z \in \Sigma \}$. Note that by Lemma~\ref{lemma:x-available}, $\Pr \big[ \sigma; r' \downarrow z~|~ r \downarrow x \big] > 0$ for each $z \in \hat{D}$. Let $F \subseteq \isd(w')$ denote the subset of eligible nodes for $r'$ after $\{ r \downarrow x, \sigma \}$. So we have $\hat{D} \subseteq F$. 

    Note that if $Level(r') < h(u)$, or if $Level(r') = h(u)$ and $w' \neq v$, then $D = F$. Recall that $v \in \SpServU$ is the fixed node we are analyzing, and $x, y \in \isd(v)$. Our proof splits into two cases:
    \begin{enumerate}
        \item $|D| \geq |F|:$ Here, we have \begin{eqnarray*}
            \Pr \big[ {\sf Q}_{x,y} ~|~r \downarrow x, \sigma \big] & \geq & \sum_{z \in \hat{D} } \Pr \big[ {\sf Q}_{x,y} ~|~r \downarrow x, \sigma, r' \downarrow z \big]\cdot \Pr \big[r' \downarrow z  ~|~r \downarrow x, \sigma  \big] \\
            & \geq & \sum_{z \in \hat{D} } \Pr \big[ {\sf Q}_{y,x}^c ~|~r \downarrow y, \sigma, r' \downarrow z \big]\cdot \Pr \big[r' \downarrow z  ~|~r \downarrow x, \sigma  \big] \\
            & \geq & \sum_{z \in \hat{D} } \Pr \big[ {\sf Q}_{y,x}^c ~|~r \downarrow y, \sigma, r' \downarrow z \big]\cdot \Pr \big[r' \downarrow z  ~|~r \downarrow y, \sigma  \big] \\
            & = & \Pr \big[ {\sf Q}_{y,x}^c ~|~r \downarrow y, \sigma \big]
        \end{eqnarray*}

        Here the second inequality is the induction hypothesis. The third inequality follows because $|D| \geq |F|$ and $\hat{D} \subseteq F$.

        \item $|F| > |D|$: As noted above, this can only happen if $Level(r') = h(u)$ and $w' = v$. Here $x, y \in \isd(w')$. For any node $z \in \isd(w')$ other  than $x$ and $y$, $z \in D \iff z \in F$. As for $x$, we have $x \in F \implies x \in D$. We conclude that $y \in F$ but $y \not\in D$, and thus $|F| = |D| + 1$. As $y$ is not eligible after $\{ r \downarrow y, \sigma \}$, we conclude that after $\{ r \downarrow x, \sigma \}$, the number of servers in $\cS(y)$ with level $\infty$ is exactly one more than the number of $h(u)$-level requests assigned to $y$. Thus, 
        \begin{equation}\label{eq:y-is-great}
            \Pr \big[ {\sf Q}_{x,y} ~|~r \downarrow x, \sigma, r' \downarrow y \big] = 1 
        \end{equation}
        \end{enumerate} 
        \end{enumerate}
        Thus, we have:
        \begin{align}
        \label{keylemma_1}
        &\lefteqn{\Pr \big[ {\sf Q}_{x,y} ~|~r \downarrow x, \sigma \big]} \notag\\[0.5em]
          &\geq  \sum_{z \in \hat{D} } \Pr \big[ {\sf Q}_{x,y} ~|~r \downarrow x, \sigma, r' \downarrow z \big]\cdot \Pr \big[r' \downarrow z  ~|~r \downarrow x, \sigma  \big] \notag\\ &\hspace{11.5em} + \Pr \big[ {\sf Q}_{x,y} ~|~r \downarrow x, \sigma, r' \downarrow y \big]\cdot \Pr \big[r' \downarrow y  ~|~r \downarrow x, \sigma  \big] \notag \\[0.5em] 
          &\geq  \sum_{z \in \hat{D} } \Pr \big[ {\sf Q}_{y,x}^c ~|~r \downarrow y, \sigma, r' \downarrow z \big]\cdot \Pr \big[r' \downarrow z  ~|~r \downarrow x, \sigma  \big]  + \Pr \big[r' \downarrow y  ~|~r \downarrow x, \sigma  \big] 
        \end{align} 
        Here the second inequality uses the induction hypothesis and Equation~\ref{eq:y-is-great}. Notice that request $r'$ has exactly $|D|+1$ eligible choices belonging to $F$,
        \begin{align*}
          &Eq.~\ref{keylemma_1}\\
          & =   \sum_{z \in \hat{D} } \Pr \big[ {\sf Q}_{y,x}^c ~|~r \downarrow y, \sigma, r' \downarrow z \big]\cdot \frac{1}{|D| + 1}  + \frac{1}{|D| + 1} \\
          & =   \sum_{z \in \hat{D} }  \Pr \big[ {\sf Q}_{y,x}^c ~|~r \downarrow y, \sigma, r' \downarrow z \big]\cdot \frac{1}{|D|} \\
          &\qquad \qquad - \sum_{z \in \hat{D} }  \Pr \big[ {\sf Q}_{y,x}^c ~|~r \downarrow y, \sigma, r' \downarrow z \big]\cdot \Big( \frac{1}{|D|} - \frac{1}{|D| + 1} \Big) + \frac{1}{|D| + 1}  \\
          & =  \Pr \big[ {\sf Q}_{y,x}^c ~|~r \downarrow y, \sigma \big] - \sum_{z \in \hat{D} }  \Pr \big[ {\sf Q}_{y,x}^c ~|~r \downarrow y, \sigma, r' \downarrow z \big]\cdot \Big( \frac{1}{|D|} - \frac{1}{|D| + 1} \Big) + \frac{1}{|D| + 1} \\
          & \geq  \Pr \big[ {\sf Q}_{y,x}^c ~|~r \downarrow y, \sigma \big] - \sum_{z \in \hat{D} }  \Big( \frac{1}{|D|} - \frac{1}{|D| + 1} \Big) + \frac{1}{|D| + 1} \\
          & \geq  \Pr \big[ {\sf Q}_{y,x}^c ~|~r \downarrow y, \sigma \big] - |D| \Big( \frac{1}{|D|} - \frac{1}{|D| + 1} \Big) + \frac{1}{|D| + 1} \\
          & =  \Pr \big[ {\sf Q}_{y,x}^c ~|~r \downarrow y, \sigma \big]
        \end{align*}
        This completes the proof. 
        \end{proof}

\def\SLR{\textsf{SL}'}

 Recall that $v \in \Gamma$ is not a leaf. We can now bound the number of nodes in $\isd(v)$ that $r$ descends into. Let $\textsf{T}_{v}:= | \{w \in \isd(v) \ | \ r \mbox{ descends into } w\}|$. (Recall that $r$ descends into any specific $w \in \isd(v)$ at most once.) The following corollary follows directly from Lemma~\ref{lem:numberofjumpsF}.

\begin{corollary}\label{clm:logbound} We have $\E[{\sf T}_{u}] = O(\log \Delta^j) $, where $j \leq 2\gap$. Furthermore, for any $v \in \Gamma \setminus \{u\}$ such that $v$ is not a leaf,  we have $\E[{\sf T}_{v} \mid r \downarrow v ] = O(\log \Delta^\gap) $, where $ \gap = 2 \log \log \Delta$ is the height separation between consecutive layers. 
\end{corollary}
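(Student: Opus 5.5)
The plan is to turn Lemma~\ref{lem:numberofjumpsF} into a bound on the number of descents by a ``halving'' or potential argument. Fix $v \in \Gamma$ that is not a leaf, and condition on $r$ ever making a random choice among $\isd(v)$ (for $v=u$ this is $r \in \zeta(u)$ with $Level(r)=h(u)$; for $v \neq u$ it is the event $r \downarrow v$). Let $N_0 \le |\isd(v)|$ be the number of eligible nodes in $\isd(v)$ at the first such choice. Order $r$'s successive descents into nodes of $\isd(v)$ as $w_1, w_2, \ldots$. Between the descent into $w_i$ and the descent into $w_{i+1}$ there is exactly one episode ${\sf E}_{\downarrow\uparrow}$. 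It begins with $r \downarrow w_i$. It ends when an $h(u)$-level request first ascends out of $w_i$; this must happen before $r$ can leave $T_{w_i}$, since $r$ stays in $T_{w_i}$ while $A(r,h(u))\cap\cS(w_i)\neq\emptyset$.

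Let $N_i$ be the number of eligible nodes just before the $(i+1)$-st descent. Eligibility is monotone: server levels only decrease, so a node that becomes ineligible stays ineligible. Hence $N_i$ is at most the count at the end of the $i$-th episode. Applying Lemma~\ref{lem:numberofjumpsF} conditionally on the full history up to the start of the $i$-th episode (the lemma is proved for an arbitrary fixed instant and history) gives
\[
\E[N_i \mid \mathcal F_{i-1}] \le \tfrac12 N_{i-1}, \qquad \text{hence} \qquad \E[N_i] \le 2^{-i} N_0 .
\]
A $(i+1)$-st descent requires $N_i \ge 1$, since $r$ can only descend into an eligible node (one containing a server of $A(r,h(u))$). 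Then
\[
\E[{\sf T}_v] = \sum_{i\ge 0}\Pr[\text{$(i+1)$-st descent}] \le \sum_{i \ge 0} \min\{1, \Pr[N_i \ge 1]\} \le \sum_{i\ge0}\min\{1, 2^{-i}N_0\},
\]
where the last step is Markov's inequality. This sum is at most $\log_2 N_0 + 2 = O(\log |\isd(v)|)$.

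It remains to plug in the sizes of the sets of immediate special descendants. For $v=u$ we have $|\isd(u)| \le \Delta^{j}$, where $j$ is the height separation between $u$ and $\isd(u)$, and $j < 2\gap$ by the discussion of special layers. This gives $\E[{\sf T}_u] = O(\log \Delta^{j})$. For $v \in \Gamma\setminus\{u\}$ non-leaf, $v$ lies in a special layer $\SL_i$, so $\isd(v) \subseteq \SL_{i-1}$ and $|\isd(v)| \le \Delta^{\gap}$. This gives $\E[{\sf T}_v \mid r\downarrow v] = O(\log \Delta^{\gap})$.

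The main obstacle is the conditioning step: justifying that Lemma~\ref{lem:numberofjumpsF} applies at every descent with the count of the \emph{current} eligible set, and that between episodes nothing increases the eligible count. I will handle this by treating the lemma as a statement conditioned on an arbitrary history at the moment of choice, and by using the monotonicity of server levels.
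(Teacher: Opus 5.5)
Your argument is correct. It differs from the paper's only in how Lemma~\ref{lem:numberofjumpsF} is turned into a logarithmic bound.

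\textbf{The paper's aggregation.} The paper partitions the run into phases in which the eligible count lies in $(|S_0|/2^{i}, |S_0|/2^{i-1}]$. It uses Markov's inequality to show that each descent--ascent episode is ``good'' (shrinks the count to at most $\tfrac23$ of its value) with probability at least $1/4$. It then notes that two good episodes end a phase. Hence each of the at most $\log|S_0|$ phases contains $O(1)$ episodes in expectation.

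\textbf{Your aggregation.} You compose the conditional halving via the tower property to get $\E[N_i]\le 2^{-i}N_0$. You then apply Markov to the integer-valued count, $\Pr[N_i\ge 1]\le \E[N_i]$, and sum the tail. This gives the explicit bound $\log_2 N_0+2$ with no phase bookkeeping.

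\textbf{Comparison.} The eligible count never increases, so a constant-probability constant-factor shrink and a constant-factor shrink in conditional expectation are interchangeable up to constants. The paper's template therefore buys nothing extra here. Your version is shorter and also states explicitly two facts the paper uses silently:
\begin{itemize}
\item the next descent into $\isd(v)$ can only occur after the current episode has ended;
\item monotonicity of eligibility lets the count at the end of an episode dominate the count at the next choice.
\end{itemize}
Both proofs rely on the lemma holding conditionally on the entire history at each choice. Its proof, which is carried out per history, supports this.

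\textbf{One convention to state.} Set $N_i=0$ when there is no $(i+1)$-st descent, in particular when the $i$-th episode never ends. This matches the lemma, whose proof counts a node $y$ only on the event ${\sf E}_{\downarrow x\uparrow}$.
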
 
\begin{proof} We present the proof for $v \in \Gamma \setminus \{u\}$ such that $v$ is not a leaf. The argument for $v = u$ is similar; the only difference is that the height difference between $u$ and nodes in $\isd(u)$ is upper bounded by $2 \gap$ rather than $\gap$. As in the proof of Lemma~\ref{lem:numberofjumpsF}, let 
\begin{equation*} 
S = \{ w \in \isd(v) \ | \ \mbox{$w$ is eligible (for request $r$}) \}
\end{equation*}

Let $|S_0|$ be the size of set $S$ when event ${r \downarrow v}$ happens. We say that the algorithm is in phase $i$ when $|S|$ satisfies,
\[\frac{|S_0|}{2^{i}} < |S| \leq \frac{|S_0|}{2^{i-1}}\]
Phases end when $|S|<1$. There can be at most $\log |S_0|$ phases.

Now, $X_i$ denote number of times $[r \downarrow w, r \uparrow w]$ happens in phase $i$ for some $w \in \isd(v)$. Therefore, ${\sf T}_{v} =X_1+\dots+X_k$. To bound $\E[X_i],$ we consider a particular occurrence of event $[r \downarrow w, r \uparrow w]$, and let $S_{current}$ and $S_{new}$ denote the set $S$ before and after the event.  By Lemma~\ref{lem:numberofjumpsF}, we have
\[\E\Big[|S_{\rm new}|\Big] \le \frac{|S_{\rm current}|}{2}\] 

By Markov's inequality, 
\[ \Pr \left[|S_{\rm new}| \leq \tfrac{2|S_{\mathrm{current}}|}{3} \right] \geq \frac{1}{4}\]

Let us call an occurrence of event $[r \downarrow w, r \uparrow w]$ {\em good} 
if $|S_{\rm new}| \leq \frac{2|S_{\mathrm{current}}|}{3}$. Thus the probability of an event $[r \downarrow w, r \uparrow w]$ being good is at least $1/4$. Furthermore, given the requirements of $|S|$ for phase $i$, there can be at most two good events in phase $i$. Thus, $E[X_i] = O(1)$. We conclude that
\[\E[{\sf T}_{v} \mid {r \downarrow v}] = \ \sum_{i=1}^{k} \E[X_i] \le O(1) \cdot \log |S_0| \le O(\log \Delta^\gap)\]
 \end{proof}
 
\subsubsection{Aggregating Over All Special Descendants of $u$.} 
\label{sec:aggregate}
We can now proceed to bound $\E[ \reqcost(r,u) ]$.
\begin{lemma}\label{lemma:requestcost}
    For any internal node $u$ such that request $r$ arrives in subtree $T_u$, \[\E[\reqcost(r,u) \mid r \in \zeta(u)] \leq O( (\log\log \Delta) \cdot  \log \Delta) \cdot \alpha^{h(u)-1}\]
\end{lemma}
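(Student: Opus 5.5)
We bound $\reqcost(r,u)$ by charging each of its assignment and reassignment costs to a node of $\SpServU$, and then summing over the nodes height by height. Throughout, $\alpha = 2$, so the distance from any node $w$ to a leaf below it is $d(w,\cdot)=\sum_{j<h(w)}\alpha^j \le 2^{h(w)}$.

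\emph{Charging.} The first assignment of $r$ at level $h(u)$ costs $2d(r,u) = O(\alpha^{h(u)-1})$. Consider a later reassignment of $r$ from $s$ to $s'$ inside $\REASSIGN$, and let $v$ be the node chosen in Line~\ref{line:v}. Then $\RSALLOCATION(r,h(u),v)$ makes $r$ descend into some node of $\isd(v)$. We claim this node differs from the $\isd(v)$ node containing $s$. Indeed, if they were equal, that node (or a lower special ancestor of $s$) would still contain an available server, and Line~\ref{line:v} would have picked a smaller-height node instead of $v$. The cost is $d(s,s')\le 2d(v,s) = O(2^{h(v)})$, and we charge it to the new descent $r\downarrow w$ with $w\in\isd(v)$. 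Each descent into a given $w$ happens at most once, so the total reassignment cost is at most $\sum_{v\in\SpServU} O(2^{h(v)})\cdot {\sf T}_v$. Descents made by the initial $\RSALLOCATION$ call are absorbed into the first-assignment cost.

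\emph{Expectations.} Corollary~\ref{clm:logbound} gives $\E[{\sf T}_u]=O(\gap\log\Delta)=O(\log\log\Delta\cdot\log\Delta)$, so the contribution of $u$ is $O(\log\log\Delta\log\Delta)\alpha^{h(u)-1}$. For $v\neq u$, ${\sf T}_v>0$ requires $r\downarrow v$, and then $\E[{\sf T}_v\mid r\downarrow v]=O(\gap\log\Delta)$. It therefore remains to bound $\sum_{v\in\SpServU\setminus\{u\}} \Pr[r\downarrow v]\,2^{h(v)}$.

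\emph{Layer sum.} Fix a special layer $\SL_i$ below $u$. The events $r\downarrow v$ for $v\in\SL_i$ correspond to distinct descents. Each such descent into a layer-$i$ node is preceded by a descent of $r$ into a layer-$(i+1)$ node (or directly into $\isd(u)$), and at most one layer-$i$ descent occurs per layer-$(i+1)$ descent. Hence $\E[\#\{v\in\SL_i: r\downarrow v\}]$ is at most the expected number of descents at the top level multiplied by a product of factors $O(\gap\log\Delta)$, one per intermediate layer. This recursion grows too fast if handled naively, so we instead follow Gupta--Lewi and exploit the geometric decay of costs. Descending one special layer multiplies the expected number of descents by $O(\gap\log\Delta)$ but divides the cost per descent by $2^{\gap}=(\log\Delta)^2$. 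For large $\Delta$ the ratio $O(\gap\log\Delta)/(\log\Delta)^2 = O(\log\log\Delta/\log\Delta)\le 1/2$. So the layer-$i$ contributions, counted downward from the top layer below $u$, form a geometric series dominated by its top term. That top term is $O(\log\log\Delta\log\Delta)\cdot 2^{h(v)}$ with $h(v)\le h(u)-\gap$, which is well below $\alpha^{h(u)-1}$.

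\emph{Main obstacle.} The difficulty is making the layer-by-layer multiplication rigorous. Corollary~\ref{clm:logbound} is conditional on $r\downarrow v$, and ${\sf T}_v$ is not independent of the other random variables. I would handle this by linearity of expectation with tower conditioning. For each $v$, write $\E[2^{h(v)}{\sf T}_v]=2^{h(v)}\Pr[r\downarrow v]\,\E[{\sf T}_v\mid r\downarrow v]$, and bound $\sum_{v\in\SL_i}\Pr[r\downarrow v]\le\E[\sum_{v'\in\SL_{i+1}}{\sf T}_{v'}]$. Applying the corollary's bound inside this sum yields the recursion $a_i\le C\gap\log\Delta\cdot a_{i+1}$ for $a_i=\sum_{v\in\SL_i}\Pr[r\downarrow v]$. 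Summing $a_i\cdot C\gap\log\Delta\cdot 2^{i\gap}$ over $i$ then gives the geometric series above. For small $\Delta$, all quantities are constants and the bound is trivial.
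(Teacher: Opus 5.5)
Your proposal is correct and follows the paper's proof essentially step for step: you charge each reassignment, through the node $v$ of Line~\ref{line:v}, to a new descent into $\isd(v)$ to get $\textsf{ReqCost}(r,u)\le\sum_{v\in\Gamma}2\beta(v)\,{\sf T}_v$; you use Corollary~\ref{clm:logbound} for the layer recursion (your $a_i\le C\gap\log\Delta\cdot a_{i+1}$ is the paper's bound on $\E[\mathcal{Y}_p]$, and it supersedes your earlier slip that only one layer-$i$ descent occurs per layer-$(i+1)$ descent); and you sum the costs as a geometric series with ratio $O(\gap\log\Delta)/\alpha^{\gap}$. One caveat, which the paper's proof shares: bounded $\Delta$ is not ``trivial,'' because the number of special layers below $u$ grows with $h(u)$, so you genuinely need $C\gap\log\Delta\le\alpha^{\gap}/2$ for the hidden constant $C$ of Corollary~\ref{clm:logbound}, which $\gap=2\log\log\Delta$ only guarantees for $\Delta$ above some absolute constant $\Delta_0$ (the easy fix is to choose $\gap$ using $\max\{\Delta,\Delta_0\}$ in place of $\Delta$).
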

\begin{proof} 

For a node $v$ in the HST, we denote by $\distToLeaf(v)$ the distance of $v$ to any leaf in the subtree $T_v$. Thus, $\distToLeaf(v) = 0$ if $v$ is a leaf. Otherwise,
\begin{equation}\label{eq:distToLeaf}
    \distToLeaf(v) = \sum_{j=0}^{h(v) - 1} \alpha^k 
\end{equation}

The first assignment of a server to $r$ after $Level(r)$ equals $h(u)$ happens via $\ALLOCATE(r)$ (in Algorithm~\ref{alg:NewStrategy}) and costs $2d(r,u) = 2 \distToLeaf(u)$. Subsequently, $r$ may get reassigned from a server $s \in A(r, h(u))$ to a server $s' \in A(r,h(u))$ using $\REASSIGN(r,s)$. Recall that the algorithm computes $y$, the first node, from the set consisting of the special descendants of $u$ and $u$ itself, on the path from $s$ to $u$ whose subtree contains an available server (that is, a server in $A(r, h(u))$). The cost of reassigning $r$ from $s$ to $s'$ is $2d(s,s') \leq 2d(y,s) = 2 \distToLeaf(y)$. Note that when this reassignment happens, $r$ ascends out of a node in $\isd(y)$ and descends into another node of $\isd(y)$. Thus, we have 
\begin{equation} \label{eq:reqcostBound}
  \reqcost(r, u)  \leq \sum_{v \in \SpServU} 2 \distToLeaf(v) \cdot {\sf T}_v 
\end{equation}

\paragraph{Number of descends.} Suppose that the special layers containing special descendants of $u$ are $\SL_0, \SL_1, \ldots, \SL_{k-1}$. Recall that $\SL_0$ consists of the leaves n the HST, and that $u$ itself may or may not be in a special layer. For each $1 \leq p \leq k$, let $\SLR_p = \SL_{k-p}$; thus the special layers containing special descendants of $u$, enumerated in order of decreasing height, are $\SLR_1, \SLR_2, \ldots, \SLR_k$. By a slight abuse of notation, let $\SLR_0 = \{u\}$.  Recall that for $1 \leq p \leq k-1$, the height difference between 
$\SLR_p$ and $\SLR_{p+1}$ is $\gap = 2 \log \log \Delta$; the height difference between $\SLR_0$ and $\SLR_1$ is some $j <  2 \gap$.
As $\distToLeaf(v)$ is a constant over all nodes $v \in \SLR_p$, we denote this quantity by $\distToLeaf_p$. 

For each $1 \leq p \leq k$, let 
\begin{equation*}
    \cY_p = | \{ v \in \SLR_{p} \cap \Gamma \ | \ r \mbox{ descends into } v \} |
\end{equation*}
We can now rewrite Inequality~\ref{eq:reqcostBound} as 
\begin{equation} \label{eq:reqcostBound1}
   \reqcost(r, u)  \leq \sum_{v \in \SpServU} 2 \distToLeaf(v) \cdot {\sf T}_v  =
   \sum_{p=1}^k 2 \beta_{p-1} \cY_p
\end{equation}

Thus, we are interested in bounding $\E[\cY_p]$. For $p \geq 2$, we have:
\begin{align*}
    \E[\cY_{p}] &= \sum_{v \in \SL_{p-1} \cap \Gamma} \E[{\sf T}_{v} \mid {r \downarrow v}] \cdot \Pr[{r \downarrow v}]\\
    &\leq O(\log \Delta^\gap) \cdot \sum_{v \in \SL_{p-1} \cap \Gamma} \Pr[{r \downarrow v}] \\
    &= O(\log \Delta^\gap) \cdot \E[\cY_{p-1}]  
\end{align*}  
Here the inequality follows from Corollary~\ref{clm:logbound}. 
Also, $\E[\cY_1]= O(\log \Delta^j)$, where $j \leq 2 \gap$, directly from Corollary~\ref{clm:logbound}. Hence, for any $1\leq p\leq k$ after unrolling the recursion we get 
\begin{equation} \label{eq:Yp}
 \E[\cY_p] \leq O(\log \Delta^j \cdot (\log \Delta^\gap)^{p-1}), \mbox{ where $j \leq 2 \gap$} 
\end{equation}
\paragraph{Cost for descends.} From~Eq.~\ref{eq:distToLeaf}, it follows that $\distToLeaf_p \leq \distToLeaf_{p-1}/ \alpha^{\gap}$ for $p \geq 2$, and $\distToLeaf_1 \leq \distToLeaf(u) / \alpha^j$. Here and below it is implicit that $j \leq 2 \gap$.

Applying Expectation over Inequality~\ref{eq:reqcostBound1}, and plugging in the bound from Inequality~\ref{eq:Yp} and the above bounds on $\beta_p$, we get:
\begin{align}\label{eq:costboundsum}
      &\lefteqn{\E[\reqcost(r,u) \mid r \in \zeta(u)]} \nonumber\\ & = \sum_{p=1}^{k} 2 \distToLeaf_{p-1} \cdot \E[\cY_p] \nonumber 
      \\ & = 2 \distToLeaf(u) \cdot \log \Delta^j +  \sum_{p=2}^{k} \frac{2 \distToLeaf(u)}{\alpha^j (\alpha^\gap)^{p-2}} \cdot O(\log \Delta^j \cdot (\log \Delta^\gap)^{p-1}) \nonumber\\[0.5em]
      &= 2 \distToLeaf(u)  \cdot \log \Delta^j + \frac{2 \distToLeaf(u) \cdot \log \Delta^{j} \cdot \log \Delta^\gap}{\alpha^j} \nonumber + \sum_{p=3}^{k} \frac{2 \distToLeaf(u) }{\alpha^j (\alpha^\gap)^{p-2}} \cdot O(\log \Delta^j \cdot (\log \Delta^\gap)^{p-1}) \nonumber\\
       &= 2 \distToLeaf(u) \cdot \log \Delta^j + \frac{2 \distToLeaf(u) \cdot \log \Delta^j \cdot \log \Delta^\gap}{\alpha^j} \left( 1 +  \underbrace{\sum_{p=3}^{k} \frac{(\gap \log \Delta)^{p-2}}{(\alpha^\gap)^{p-2}}}_{I} \right)
\end{align}
By setting $\alpha=2$, $\gap=2\log_2\log \Delta$ in~Eq.~\ref{eq:costboundsum}, for the last term in summation, 
\begin{align*}
           I &= \sum_{p=3}^{k} \frac{(2 \log_2 \log \Delta \cdot (\log \Delta ) )^{p-2}}{(2^{2 \log_2 \log \Delta})^{p-2}} \\ 
           &= \sum_{p=3}^{k} \frac{(2 \log_2 \log \Delta)^{p-2} \cdot (\log \Delta)^{p-2}}{(\log \Delta)^{{2}(p-2)}} \\
           &= \sum_{p=3}^{k} \frac{(2 \log_2 \log \Delta)^{p-2}}{(\log^2 \Delta)^{p-2}}
\end{align*}
The $(\log \Delta)^2 \geq 2\log\log \Delta$, therefore, the series converges to some constant $c'$. 
\medskip

First let us look at the general and complex case where the height $h(u)$ is at least $\gap$. In this case $\gap \leq j < 2 \gap$.
Thus in Eq. \ref{eq:costboundsum}, we have $\frac{ \log \Delta^j}{\alpha^j} = (j\log \Delta)/2^j$, which can be upper bounded by picking $j=\gap$, thus giving $\frac{2 \log \log \Delta \cdot (\log \Delta)}{(\log \Delta)^2}$, which is $O(1)$. Thus, the expression in Eq.~\ref{eq:costboundsum} is bounded by $O(\beta(u) \cdot \gap \cdot \log \Delta)$ = $\log \log \Delta \cdot (\log \Delta) \cdot \beta(u)$. Noting that $\beta(u) = d(r,u)$, we get,
\begin{eqnarray*}
\E[\reqcost(r,u) \mid r \in \zeta(u)] \leq O( (\log \log \Delta) \log \Delta) \cdot 2d(r,u) \leq O(( \log \log \Delta) \log \Delta) \cdot \alpha^{h(u)-1} 
\end{eqnarray*}
Now let us consider the case $h(u) < \gap$. In this case there is only one special layer with special descendants of $u$, and thus $k = 1$. We have,
\begin{eqnarray*}
\E[\reqcost(r, u) \mid r \in \zeta(u)] = O( 2 \beta_u \cdot \log \Delta ^j ) = O(( \log \log \Delta) \cdot \log \Delta) \cdot \alpha^{h(u)-1},
\end{eqnarray*}
as $j < \gap = 2 \log \log \Delta$.
 \end{proof}

\subsection{Competitive Ratio} \label{SS:4.3}

We conclude that the Online Algorithm~\ref{alg:NewStrategy} is $O((\log\log \Delta) \cdot  \log \Delta)$-competitive on a $\Delta$-ary $2$-HST.
This is immediate from Theorem~\ref{thm:comp-in-reassign} and the fact that the conversion from an algorithm in the reassignment model to an online algorithm does not increase the cost, as explained in Section~\ref{sec:online-conv}. This completes the proof of Theorem~\ref{thm:main-result}. \hfill\qed 

\bibliographystyle{alphaurl} \bibliography{references}

\newpage
\appendix
\section{Lower Bound on BBGN's Competitive Ratio}
\label{sec:app-bbgn-lb}
We observe that there is a lower bound of $\Omega(\log n)$ on the expected competitive ratio of the BBGN algorithm \cite{metriclog2gupta} on $2$-ary $2$-HSTs. The example HST consists of a binary tree $T$, rooted at node $x$. The right subtree $T_R$ of $x$ consists of a complete binary tree with $n - 1$ leaves. The left subtree $T_L$ is a root to leaf path ending in a single leaf, which we denote by $y$. The number of edges on every root to leaf path is the same as in any HST; the edge lengths on any root to leaf path decrease by a factor of $2$ like in any $2$-HST.

There is one server located at each leaf of $T$. As for the requests, the first two requests $r_1$ and $r_2$ arrive at the only leaf $y$ in $T_L$. The remaining requests $r_3, r_4, \ldots, r_n$ arrive at distinct leaves in $T_R$ in an arbitrary way. Note that there is one leaf $z$ in $T_R$ where no request arrives.

An optimal solution to this instance consists of matching the request $r_2$ to the server at $z$, and the remaining requests to the servers that are co-located. The cost $\OPT$ of this optimal solution is the cost paid by $r_2$ -- this is twice the distance between the root to any leaf, which we denote by $\beta$.

Now consider running the reassignment algorithm on this instance. When $r_2$ arrives, it finds that the colocated server at $y$  is matched to $r_1$. The algorithm therefore assigns $r_2$ to a server in $T_R$ chosen uniformly at random, after setting $level(r_2)$ to $h(x)$. Subsequent requests are assigned to colocated servers, but when a request arrives at the location where the server assigned to $r_2$ is stationed, $r_2$ is reassigned once again to a unmatched server from $T_R$ chosen uniformly at random. And this process repeats.

Each time the request $r_2$ is reassigned, the BBGN algorithm charges a cost of $\beta$. It is easy to see that the expected number of reassignments of $r_2$ is $\Omega(\log n)$. This shows that the BBGN analysis yields a $\Theta(\log n)$ competitive ratio for their reassignment algorithm. 

Furthermore, we can specialize this example so that, if we consider the cost of the final online matching that is obtained via their reassignment algorithm, the expected competitive ratio is still $\Omega(\log n)$. That is, we can show that one can't really exploit the slack in the way reassignment costs are charged in the BBGN algorithm. For this modified example, we take a random permutation of the leaves in $T_R$, and assign requests $r_3, r_4, \ldots, r_n$ to be the first $n-2$ leaves in the permutation; the requests $r_1$ and $r_2$ still arrive at the only leaf $y$ in $T_L$. 

Note that whenever $r_2$ is reassigned from a server $s$ in $T_R$, it is assigned to an available server $s'$ in $T_R$ chosen uniformly at random. The above placement of requests ensures that the expected distance between $s$ ad $s'$ is $\Omega(\beta)$; that is, the expected distance from the location of the current request $r_k$ to a location chosen uniformly at random from the locations of the remaining requests $r_{k+1}, r_{k+2}, \ldots, r_n$ and $z$ is $\Omega(\beta)$.

\section{The Approach of Kalyanasundaram et al.}
\label{sec:KPS}
Kalyanasundaram et al.~\cite{KPS23} give a reduction from the online metric matching problem on 2-HSTs to several instances of online matching on the uniform metric. They also give a specific randomized online algorithm for the uniform metric whose expected competitive ratio is $\Theta(\log K)$, where $K$ is the number of number of points in the metric space. In particular, the competitive ratio is independent of the number $n$ of servers and requests. Here we give an instance on a $\Delta$-ary 2-HST where the expected competitive ratio that follows from their reduction is $\Omega(\log^2 \Delta)$. 

The instance is defined on a $\Delta$-ary $2$-HST of height 2. The root $x$ has $\Delta$ children $\{v_1,\dots,v_\Delta\}$ and each child $v_i$ has $\Delta$ children which are leaf nodes, denoted by $w_{i,1}, w_{i,1}, \dots, w_{i,\Delta}$. There is one server at each leaf, for a total of $\Delta^2$ servers. The length of each edge from $x$ to its children is $2$, and the length of each edge from $v_i$ to its children is $1$. The requests will arrive in the following order: the first two requests $r_1$ and $r_2$ will arrive on $w_{1,1}$ and the remaining $\Delta^2-2$ requests will arrive one-by-one on server $w_{i,j}$ in increasing order of $i$ and $j$, i.e. $r_3$ onward requests will sequentially arrive on locations $(w_{1,2},\dots w_{1,\Delta}, w_{2,1},\dots, w_{2, \Delta}, \dots, w_{\Delta,1},\dots,w_{\Delta,\Delta-1})$. Note that no request arrives on $w_{\Delta,\Delta}$. The optimal matching in this instance matches $r_1$ to the server at $w_{1,1}$, $r_2$ to the server at $w_{\Delta,\Delta}$, and rest of the requests to the servers on their arrival locations. The cost for the optimal matching $\OPT$ is $d(r_2,w_{\Delta,\Delta}) =6$, i.e. matching $r_2$ to $w_{\Delta,\Delta}$ via root $x$ costs $6$ and other matching pairs have cost $0$. 

\cite[Section 2.3]{KPS23} give a black box reduction from any instance on a 2-HST to uniform instances. On the above example, this yields one uniform instance $U_x$ at the root, where the $\Delta + 1$ locations in the uniform metric correspond to $x, v_1, v_2, ...., v_{\Delta}$. Location $x$ has no servers, and each location $v_i$ has $\Delta$ servers. Location $v_1$ receives the first $\Delta+1$ requests, $v_2$ the next $\Delta$ requests, and so on till $v_{\Delta-1}$  which receives $\Delta$ requests, and finally $v_\Delta$ receives the last $\Delta-1$ requests. 

This reduction also yields one uniform instance $U_{v_i}$ for each $v_i$. These instances can be partitioned into three types:

\begin{enumerate}
\item For the instance $U_{v_1}$, it contains $v_1$ and $\{w_{1,1}, \dots, w_{1,\Delta}\}$ as locations, where every location has one server except $v_1$, which has no servers. Location $w_{1,1}$ gets requests $r_1,r_2$, and each $w_{1,j},~j \in\{2,\dots, \Delta - 1\}$, receive one request in that order. Note that in $U_{v_1}$ we do not have the request at $w_{1, \Delta}$; the black-box reduction ``moves'' that request in the original instance to some other instance $U_{v_i}$, where $v_i \in Z$, as described below. Note that the optimum cost for instance $U_{v_1}$ is $2$, the distance in that uniform metric.

\item There is a subset $Z \subseteq \{v_2, v_3, \ldots, v_{\Delta} \}$ where each $U_{v_i}$, for $v_i \in Z$  has $w_{i,1}, w_{i,2}, \dots, w_{i, \Delta}$ and $v_i$ as locations. There is one server at each $w_{i,j}$, and no server at $v_i$. The request sequence has the first request at $v_i$, followed by $\Delta - 1$ other requests, each at a different $w_{i,j}$. Note that the optimum cost for instance $U_{v_i}$ for each $v_i \in Z$ is $2$, the distance in that uniform metric.

\item For each remaining $v_i$ that is not in $T\cup \{v_1\}$, the uniform instance $U_{v_i}$ has $w_{i,1}, w_{i,2}, \dots, w_{i, \Delta}$ and  $v_i$ as locations. There is no server at $v_i$, and one server at each of the $w_{i,j}$. 
The request sequence has $\Delta$ requests arrive at $w_{i,1}, w_{i,2}, \ldots, w_{i, \Delta}$, in that order, just like in the original HST instance. The optimum cost for each such $U_{v_i}$ is $0$, as each request can be matched to the co-located server.

\end{enumerate}

The optimal solution for $U_x$ has cost $4$ -- the distance in that uniform metric. The online algorithm of \cite{KPS23} for the uniform metric pays an expected cost of $\Theta(\log \Delta)$ here. This implies that $|Z| = \Theta(\log \Delta)$. Now for each each instance $U_{v_i}$ corresponding to $Z$, the key point is that the optimum cost is $2$, and not $0$. Thus the expected cost of the online algorithm of \cite{KPS23} for each such instance $U_{v_i}$ is $\Theta(\log \Delta)$. Thus the overall expected cost incurred is at least $|Z| \times \Theta(\log \Delta)$ which is $\Omega(\log^2 \Delta)$.

\section{Metrics with Few Points}\label{sec:leaves}
Consider an online matching instance on an $\alpha$-HST with $k$ leaves. As such an HST is $k$-ary, Theorem~\ref{thm:main-result} implies an expected  competitive ratio of $O( (\log k) \log \log k)$. With a simple change to the set up, our framework yields the optimal ratio of $O(\log k)$. We set the height difference $\gap$ between consecutive layers to be greater than the height of the tree. This means that we have only one special layer, $\SL_0$, and this layer contains the leaves of the HST. Thus for any internal node $u$, its special descendants are the leaves in the subtree at $u$; these make up the set $\isd(u)$ of $u$'s immediate special descendants also. 

With this change, our reassignment algorithm (Algorithm~\ref{alg:NewStrategy}) boils down to the following. Consider a request $r$ that is about to be assigned a server for the first time after $Level(r)$ attains some value $\ell \geq 0$. If $\ell = 0$, $r$ is assigned an available server at the leaf $\alpha$ where $r$ arrived. If $\ell > 0$, then let $u$ denote the ancestor of $r$ at height $\ell$. Then, among the leaves in the subtree of $u$ that have available servers for $r$, we pick one leaf uniformly at random and assign an available server at that leaf to request $r$. Now suppose that at a later time request $r$ is still at level $\ell$, has been assigned a server $s$ at level $\ell$ that resides at a leaf $\beta$ of the subtree at $u$, and now needs to be reassigned. In this case, if $\beta$ has  available servers, we assign one such server to $r$ -- the reassignment cost here is $0$, and this is how the locality of our reassignment strategy manifests. Otherwise, if $u$ has available servers, then, among the leaves in the subtree of $u$ that have available servers for $r$, we pick one leaf uniformly at random and assign an available server at that leaf to request $r$. If $u$ does not have available servers for $r$, we increase $Level(r)$. Thus, our reassignment algorithm (Algorithm~\ref{alg:NewStrategy}) ends up being a natural modification of that of \cite{metriclog2gupta}.

For the analysis, we can show (see Eq.~\ref{eq:reqcost}) that 
\begin{equation}
    \E[\reqcost(r,u) \mid r \in \zeta(u)] \leq c \cdot \log k \cdot \alpha^{h(u) - 1}
\end{equation}
This follows directly from the arguments in Corollary~\ref{clm:logbound}, as $|\isd(u)| \leq k$. As $u$ has no special descendants other than the leaves at the subtree of $u$, there is no need to aggregate costs over other special descendants, as we did in Lemma~\ref{lemma:requestcost}. This gives us the expected competitive ratio of $O(\log k)$.

\section{Other Related Work}\label{sec:otherwork}
Nayyar and Raghvendra \cite{deterministic_line_alg} present an \textit{input-sensitive} analysis of the Robust Matching Algorithm, a deterministic  algorithm introduced in \cite{RM_alg}. They bound the competitive ratio by $\cO(\mu_\M \log^2n)$, where $\mu_\M$ is the ratio of the minimum spanning or TSP tour length to the diameter of the metric space $\M$. Thus they relate the algorithm's competitive ratio to the metric's structural parameter $\mu_\M$, explaining why the algorithm performs significantly better on structured metrics despite having a $\Theta(n)$ worst-case bound. For line metric $\mu_\M=2$, hence they obtain a $\cO(\log^2 n)$-competitive deterministic line algorithm. For the line metric, \cite{det_line_analysis} improved the upper bound on the competitive ratio for the robust matching algorithm to $\Theta(\log n)$.  Peserico and Squizzato \cite{line_lowerbound} show lower bound of $\Omega(\sqrt{\log n})$ on the line metric for randomized algorithms, even with an oblivious adversary. Other useful lower bounds on the competitive ratio for certain important types of algorithms are presented in \cite{collectionlowerboundsonline}.

The competitive ratio under the \textit{adversarial model} with irrevocable decisions is often viewed as overly pessimistic, giving all the power to the adversary. This limitation has motivated alternative models that relax these assumptions, including the frameworks that allow limited recourse and stochastic arrivals. 

A model that is very closely related to the reassignment framework is the \textit{recourse} model introduced by Gupta et al. in~\cite{gupta_et_al:LIPIcs.APPROX/RANDOM.2020.40}. They ask the question, that given an ability to rematch the limited number of previously paired requests and servers, can we obtain a strictly better performance? (unlike the algorithms in the reassignment framework, here one doesn't need to pay the reassignment cost.) Giving an affirmative answer to the question, they achieve $O(\log n)$-competitive ratio with amortized recourse budget of $O(\log n)$. They extend the idea of recourse to dynamic online matching, where the requests and servers can also depart.

This model has also been studied independently by Megow and N\"olke~\cite{Megow2025}, where they use the different $t$-net cost algorithm by Raghvendra (\cite{RM_alg},~\cite{det_line_analysis}) to achieve $O(1)$-competitive ratio with amortized recourse budget of $O(\log n)$ on the line metric. 

Alternatively, the \textit{stochastic} models have been extensively studied in the literature (see, e.g.,\\~\cite{randomarrival},~\cite{RM_alg},~\cite{gupta_et_al:LIPIcs.ICALP.2019.67},~\cite{yang2026online},~\cite{smoothed_analysis_delta-ary}). 
There are two popular models in the stochastic setting, 
\begin{enumerate}
    \item The \textit{random-order} arrival: In this, the request sequence is chosen uniformly at random from the multiset of permutations of the locations selected by an adversary.
    \item The \textit{i.i.d} arrival: We assume the requests arrive as an i.i.d sequence sampled from a fixed unknown or known distribution.
\end{enumerate}

\end{document}